\documentclass[journal ]{new-aiaa}[2018/01/10, v1.2]

\title{Approximate Stochastic Optimal Control for Linear Time Invariant Systems with Heavy-tailed Disturbances}

\author{Shawn Priore 
\footnote{Graduate student, Department of Electrical and Computer Engineering} 
and Meeko Oishi.
\footnote{Professor, Department of Electrical and Computer Engineering}}
\affil{University of New Mexico, Albuquerque, NM. 87131}
\author{Christopher Petersen 
\footnote{Assistant Professor, Department of Mechanical and Aerospace Engineering}}
\affil{University of Florida, Gainesville, FL. 32611}

\usepackage[utf8]{inputenc}
\usepackage{textcomp}
\usepackage{graphicx}
\usepackage{amsmath}
\usepackage[version=4]{mhchem}
\usepackage{siunitx}
\usepackage{longtable,tabularx}
\setlength\LTleft{0pt} 
\usepackage{url}
\usepackage{bm}
\usepackage{multirow}
\usepackage{cuted}
\usepackage{mathrsfs}

\usepackage{amsthm}

\newtheorem{defn}{Definition}
\newtheorem{prop}{Property}
\newtheorem{lem}{Lemma}

\newtheorem{prob}{Problem}
\newtheorem{subprob}{Problem}[prob]
\newtheorem{assm}{Assumption}

\usepackage{algorithm}
\usepackage{algpseudocode}

\algdef{SE}[DOWHILE]{Do}{doWhile}{\algorithmicdo}[1]{\algorithmicwhile\ #1}%

\newcommand{\Prob}{\mathbb{P}}
\newcommand{\R}{\mathbb{R}}
\newcommand{\N}{\mathbb{N}}
\newcommand{\cov}{\mathrm{cov}}
\newcommand{\bvec}[1]{\vec{\boldsymbol{#1}}}


\begin{document}
\maketitle

\begin{abstract}
    We propose an open loop control scheme for linear time invariant systems perturbed by multivariate $t$ disturbances through the use of quantile reformulations. The multivariate $t$ disturbance is motivated by heavy tailed phenomena that arise in multi-vehicle planning planning problems through unmodeled perturbation forces, linearization effects, or faulty actuators. Our approach relies on convex quantile reformulations of the polytopic target sets and norm based collision avoidance constraints to enable fast computation. We embed quantile approximations of the Student's $t$ distribution and the beta prime distribution in a difference-of-convex function framework to compute provably safe but likely suboptimal controllers. We demonstrate our method with three satellite rendezvous examples and provide a comparison with particle control. 
\end{abstract}

\section*{Nomenclature}

{\renewcommand\arraystretch{1.0}
\noindent\begin{longtable*}{@{}l @{\quad=\quad} l@{}}
$F_x, F_y, F_z$ & force in the $x$, $y$, and $z$ direction, N \\
$F_{\theta}$ & rotational force in the $\theta$ direction, N$\cdot$m \\
$\overline{I}_n$ & $n$-dimensional identity matrix \\
$J_{\theta}$ & moment of inertia, kg$\cdot$m\textsuperscript{2} \\
$k$ & discrete time index \\
$m_c$ & mass, kg \\
$N$ & final time index \\
$R_0$ & orbital radius, m \\
$s_i, \eta_i, \upsilon_i$ & slack variables \\
$\mathscr{T}$ & polytopic target set \\
$\vec{u}(k)$ & input at time step $k$ \\
$\vec{U}$ & concatenated input vector\\
$\mathscr{U}$ & admissible input space \\
$v$ & number of satellites \\
$\bvec{w}(k)$ & disturbance vector at time $k$\\
$\bvec{W}$ & concatenated disturbance vector \\
$\bvec{x}(k)$ & state at time step $k$ \\
$\bvec{X}$ & concatenated state vector \\
$\mathscr{X}$ & admissible state space\\ 
$BPrime(\gamma, \delta)$ & beta prime distribution with shape parameters $\gamma$ and $\delta$\\
$CSquare(\nu)$ & chi-square distribution with parameter $\nu$ \\
$Gamma(\cdot, \cdot)$ & gamma distribution \\
$Normal(\vec{\mu}, \overline{\Sigma})$ & multivariate Gaussian distribution with parameters $\vec{\mu}$ and $\overline{\Sigma}$\\
$t(\vec{\mu}, \overline{\Sigma}, \nu)$ & multivariate $t$ distribution with parameters $\vec{\mu}$, $\overline{\Sigma}$, and $\nu$\\
$\alpha$ & probabilistic violation threshold \\
$\beta (\cdot,\cdot)$ & beta function \\
$\Gamma(\cdot)$ & gamma function \\
$I_{x}(\cdot, \cdot)$ & incomplete beta function evaluated at $x$ \\
$\lambda_{\max}(\cdot)$ & maximum eigenvalue of the input matrix \\
$\mu$ & gravitational constant, m${}^{3}$ kg${}^{-1}$ s${}^{-2}$ \\
$\vec{\mu}$ & location vector parameter \\
$\nu$ & degrees of freedom parameter \\
$\overline{\Sigma}$ & scale matrix parameter \\
$\phi(\cdot)$ & probability density function (pdf) \\
$\Phi(\cdot)$ & cumulative density function (cdf)\\
$\Phi^{-1}(\cdot)$ & quantile function \\
$\omega$ & orbital rate  \\
$\overline{0}_{n\times m}$ & $n\times m$ dimensional matrix of zeros \\\
$\N$ & natural numbers with 0\\
$\N_{N}$ & natural numbers from 0 to N inclusive\\
$\N^{+}$ & natural numbers without 0\\
$\N^{+}_{N}$ & natural numbers from 1 to N inclusive \\
$\R_+$ & positive real numbers \\
$\R^{n \times n}_{++}$ & set of $n \times n$ positive definite matrices \\
$\mathrm{diag}(\overline{A}_1, \dots,\overline{A}_q)$ & a block diagonal matrix with elements $\overline{A}_1,\dots,\overline{A}_q$  \\
$\|\vec{\cdot}\|$ & 2-norm of an input vector \\
\multicolumn{2}{@{}l}{Subscripts}\\
$i-j$ & difference of the $i$ and $j$\textsuperscript{th} vehicle's vectors
\end{longtable*}}

\section{Introduction} \label{sec:intro}

Enabling extended satellite lifetimes through advances in on-orbit refueling and servicing depots has been a focal point for many private and public organizations. Such advances are requiring new technologies to enable efficient autonomous coordination between multiple satellites despite the harsh environment and limited resources such as fuel or computational abilities. These new technologies must accommodate path planning and optimization for mission critical vehicles under uncertain conditions that may arise from modeling inaccuracies, inaccurate or time delayed sensing, and faulty or inconsistent actuation mechanisms. These stochastic elements are frequently modeled using Gaussian disturbances for mathematical convenience. However, noise processes from conditions like these can take on non-Gaussian characteristics, such as heavy tailed phenomena. Heavy tailed distributions are defined as distributions with tail probabilities do not exhibit exponential decay. Another interpretation would be that outlier events are more common. In stochastic systems, heavy tailed phenomena can arise from faulty thrusters or sensors, extreme weather or geological events such as hurricanes or earthquakes \cite{Pisarenko2010}, or even magnetic disturbances caused by solar winds and solar flares \cite{Popescu2010}. Computation of controllers that meet required probabilistic safety thresholds for target acquisition and collision avoidance in these conditions require accurate assessments of disturbance characteristics. In this paper we seek to construct solutions for stochastic optimal control problems in an computationally efficient and tractable manner for cooperative multi-vehicle coordination problems exhibiting heavy tailed noise processes. 

Stochastic problems often result in intractable integrals that require techniques that trade off optimally for tractability. Moment based approaches  \cite{nemirovski2006convex, calafiore2006distributionally, paulson2017stochastic} require analytic expressions for computational parameters that may only exist conditionally \cite{Idan2008, Duong2019}, or introduce conservative reformulations of constraints through Boole's inequality \cite{oldewurtel2014stochastic,ono2008iterative,vitus_feedback_2011}. Fourier transforms have been used to bypass the quadrature computation required to evaluate probability integrals \cite{Idan2019}, and in combination with piecewise affine approximations, have been used to evaluate chance constraints \cite{Sivaramakrishnan2021TAC, vinod2019piecewise} for linear time-invariant (LTI) systems with noise processes that elicit log-concave probability density functions (pdf). Sample based approaches have been employed for systems with known disturbances \cite{calafiore2006scenario,blackmore2011chance}.  Sample reduction techniques have been employed to increase accuracy and efficiency for both convex \cite{campi2011sampling,care2014fast} and non-convex \cite{Campi2018TAC} problems. However, limited computational capacity continues to hindered accuracy as these method can only guarantee safety asymptotically \cite{calafiore2006scenario,blackmore2011chance}. Learning based techniques are standard for dealing with heavy tailed disturbances \cite{Idan2008, Li2014, Wang2008} but are reliant on computationally intense processes and may lack in interpretability. 

With the near exponential increase in active satellites and debris over the past few years, collision avoidance has become an increasingly important consideration in satellite control. Sequential convex programming can be applied to saturation penalty functions based on 2-norm collision avoidance constraints but suffers from singularities and potential non-convergence during gradient decent \cite{Zhao2022}. Techniques for calculating collision avoidance probabilities have been posed \cite{Enriquez2022}, but are difficult to embed in control algorithms and may be limited by the shape of the satellite. Methods for rendezvous and proximity operations between a controlled satellite and a stationary or potentially non-cooperative satellite have been proposed \cite{DiMatteo2012, Maestrini2022, Ulybyshev2011} but lack in their ability to accommodate multiple controlled vehicles. Previous work of ours \cite{PrioreACC21} has solved this problem in Gaussian environments with difference-of-convex function optimization but reliance on predetermining risk allocation between constraints led to sub-optimal solutions, and the method cannot accommodate non-Gaussian disturbances. 

The work presented in this paper uses the theory presented in \cite{prioreACC22} for arbitrary disturbances to extend the work presented in \cite{PrioreACC21} to accommodate the both a multivariate $t$ disturbance and embed risk allocation in the optimization problem. This approach employs the use of quantile reformulations and approximations to solve chance constrained stochastic optimization problems. As in \cite{PrioreACC21, prioreACC22}, we consider cooperative multi-vehicle planning problems with two types of constraints: a) polytopic target set constraints, and b) norm-based collision avoidance constraints. Constraints like these arise where multiple vehicles must reach desirable states while maintaining safe distances from other vehicles and static objects in the environment. We show that constraints of this form can be bounded by constraints that are affine in the control input and disturbance. The target set constraints yield linear constraints, while the collision avoidance constraints are reverse-convex. These bounds are conservative with respect to the initial chance constraints and result in a tightened approximation of the original problem. Once in this form, satisfying the probabilistic chance-constraints no longer require evaluating a series of multidimensional integrals. However, evaluation of the quantile, the inverse of the cumulative distribution function (cdf), is required to evaluate the chance constraints within this new formulation. For both constraint types, the quantile does not elicit an analytic form. 

The theory presented in \cite{prioreACC22} employs a Taylor series approximation of the quantile function based on successive derivatives of the pdf. We evaluate the Taylor series at regular intervals to generate a piecewise affine approximation of the quantile. This enables us to embed the quantile in a difference-of-convex programming framework \cite{boyd_dc_2016}.  We employ an iterative approach, as in \cite{ono2008iterative, PrioreACC21, prioreACC22}, to compute optimal controllers and allocate risk. This approach exploits convexity to enable fast computation despite being iterative. We note that probabilistic guarantees can only be satisfied asymptotically given the present but unknown error in the quantile approximation. {\em The main contribution of this paper is to develop a difference-of-convex framework that enables efficient evaluation of target set and collision avoidance probabilistic chance-constraints in cooperative multi-vehicle planning problems for LTI systems with heavy tailed noise processes.} 

The paper is organized as follows. Section \ref{sec:prelim} provides mathematical preliminaries and formulates the optimization problem. Section \ref{sec:methods} reformulates the chance constraints by approximating the quantile function.  Section \ref{sec:results} demonstrates our approach on three multi-satellite rendezvous problems, and Section \ref{sec:conclusion} provides concluding remarks.

\section{Preliminaries and Problem Formulation} \label{sec:prelim} 

\subsection{The Multivariate t Distribution} \label{ssec:t_dist}

The multivariate $t$ distribution is the vector generalization of the Student's $t$ distribution \cite{kotz_nadarajah_2004}.  The multivariate $t$ distribution encompasses a family of distributions characterized by parameters describing location, correlation structure, and how heavy tailed a distribution is. It is defined as follows. 
\begin{defn}[\cite{kotz_nadarajah_2004}]
A $n$-dimensional multivariate random variable that elicits the pdf
\begin{equation}
    \phi(\bvec{x}) = \frac {\Gamma \left((\nu + n)/2\right) \nu ^{\nu/2}}{\Gamma(\nu /2) \pi^{n/2} \mathrm{det}(\overline{\Sigma})^{\frac{1}{2}}}  \left[\nu+(\bvec{x}-\vec{\mu})^{T}\overline{\Sigma}^{-1}(\bvec{x} -\vec{\mu})\right]^{-\frac{\nu + n}{2}} 
\end{equation}
is said to have a multivariate $t$ distribution with location $\vec{\mu} \in \R^n$, scale matrix $\overline{\Sigma}\in \R_{++}^{n \times n}$, degrees of freedom $\nu\in \N^{+}$.
\end{defn}
The degree of freedom parameter, $\nu$, is a quantitative measure of how heavy the tails of the distribution are. Lower values of $\nu$ correspond to heavier tails. Note that $\nu = 1$ corresponds to the multivariate Cauchy distribution and the limiting distribution, as $\nu \rightarrow \infty$, is the multivariate Gaussian.

The common way in which a $n$ dimensional multivariate $t$ random variable is constructed is
\begin{equation}
    \bvec{x} \equiv \frac{\overline{\Sigma}^{\frac{1}{2}}\bvec{y}}{\sqrt{\boldsymbol{z}/\nu}} + \vec{\mu}
\end{equation}
where $\bvec{y} \sim Normal(\vec{0}, \overline{I}_n)$ and $\boldsymbol{z} \sim CSquare(\nu)$. Several properties can be derived from this construction. We outline the properties of marginal distributions and affine transformations in the Appendix to preface their use later.

It is clear from this construction that while the random vector's elements may by uncorrelated, they are not independent. In many practical applications, independence assumptions are used for mathematical convenience but may not accurately model the underlying circumstances. Consider a thrusters of a large aircraft. In the early stages of a flight the metal casing surrounding the thrusters will not have reached peak temperature. This time varying parameter effects the overall efficiency of the aircraft's engine \cite{Braun2018}. This time dependent structure, if incorporated into the model though the disturbance, makes the independence assumption invalid. Similarly, we can consider the weight of the propellant in high delta-v satellite maneuvers such as orbital inclination changes. Since the acceleration is inversely proportional to the mass of the craft, stochastic perturbations for the amount of propellant used to complete a maneuver can directly impact the amount of propellant needed to complete future maneuvers. Hence, the disturbance can be considered state dependent.  

\subsection{The Beta Prime Distribution} \label{ssec:beta_prime_dist}

A beta prime random variable that will be leveraged for collision avoidance is defined as follows. 
\begin{defn}[\cite{johnson1995continuous}]
A non-negative univariate variate random variable that elicits the pdf
\begin{equation}
    \phi(\bvec{x}) = \frac{x^{-\gamma - 1} (1+x)^{-\gamma-\delta}}{\beta (\gamma,\delta)}
\end{equation}
is said to have a beta prime distribution with shape parameters $\gamma \in \R_+$ and $\delta \in \R_+$. 
\end{defn}
The shape parameters define the polynomial shape of the pdf and the quantile. Of note, when $\gamma \leq 1$, the quantile is strictly convex as the pdf is monotonically decreasing. However, when $\gamma > 1$, the quantile is only convex in the region $p\in\left[ \Phi(\frac{\gamma-1}{\delta +1}), 1 \right]$.  Further, when $\gamma=1$ and/or $\delta=1$, the cdf has an analytic form. However, in many cases analytic expressions of the cdf do not guarantee analytic expressions of the quantile.

To construct a beta prime random variable, take the ratio of two gamma random variables with the same scale rate, $\bvec{x} \equiv \frac{\boldsymbol{y}}{\boldsymbol{z}}$ where $\boldsymbol{y} \sim Gamma(\gamma, \theta)$ and $\boldsymbol{z} \sim Gamma(\delta, \theta)$. The beta prime distribution, also known as the inverted beta or gamma ratio distribution, is commonly used in Bayesian analysis as the conjugate prior for a beta random variables and in financial modeling as a distribution to assess risk odds. In our proposed method, we use the beta prime distribution in our reformulation of the collision avoidance constraints. As we show in the Appendix, the beta prime distribution can arise by taking the squared 2-norm of a standard multivariate $t$ distributed random variable and dividing by the degrees of freedom, $\nu$. 

Upon reformulation of the intervehicle collision avoidance chance constraint, our function of interest will include the sum of two independent beta prime random variables. The infinite divisibility properties of the beta prime distribution will allow us to maintain a common form. We outline the infinite divisibility property in the Appendix.

\subsection{Problem Formulation} \label{ssec:problem_formulation}

We consider a discrete, linear, time-invariant system given by 
\begin{equation}
    \bvec{x}(k+1) = \overline{A} \bvec{x}(k) + \overline{B} \vec{u}(k) + \bvec{w}(k) \label{eq:system_one_step}
\end{equation}
with state $\bvec{x}(k) \in \mathscr{X} \subseteq \R^n$, input $\vec{u}(k) \in \mathscr{U} \subset \R^m$, and discrete time index $k \in \N_{N}$. Initial conditions, $\vec{x}(0)$, are assumed to be known and the set $\mathscr{U}$ is a convex polytope. 
\begin{assm} \label{assm:disturbance}
The disturbance, $\bvec{w}(k)$, is a multivariate $t$ distributed random vector,
\begin{equation}
        \bvec{w}(k)  \sim  t \left(\vec{\mu}, \overline{\Sigma}, \nu \right)
\end{equation}
for $\vec{\mu} \in \R^n$, $\overline{\Sigma} \in \R_{++}^{n \times n}$, and $\nu \in \N^{+}$.
\end{assm}
As the location parameter $\vec{\mu}$ is affine in the construction of the multivariate $t$ distribution, we will use $\vec{\mu}=\vec{0}$ to simplify derivations with no consequence. 

With a finite time horizon $N \in \N^{+}$, we can exploit the linearity of the system to rewrite the dynamics at time step $k$ as an affine summation of a transformed initial state, a concatenated input vector, and a concatenated disturbance vector:
\begin{equation} \label{eq:system_k_step}
    \bvec{x}(k) = \overline{A}^k \vec{x}(0) + \overline{\mathcal{C}}(k) \vec{U} + \overline{\mathcal{D}}(k) \bvec{W} \quad \forall k \in \N_{N}
\end{equation}
with 
\begin{subequations}
\begin{alignat}{2}
    \vec{U} =& \begin{bmatrix} \vec{u}(0)^\top & \ldots  & \vec{u}(N-1)^\top \end{bmatrix}^\top &&\in \mathscr{U}^{N} \\
    \bvec{W} =& \begin{bmatrix} \bvec{w}(0)^\top & \ldots & \bvec{w}(N-1)^\top \end{bmatrix}^\top &&\in \mathbb{R}^{Nn} \\
    \overline{\mathcal{C}}(k) = & \begin{bmatrix} \overline{A}^{k-1} \overline{B} &  \ldots &  \overline{A} \overline{B} &  \overline{B} &  \overline{0}_{n \times (N-k)m} \end{bmatrix} && \in \R^{n \times Nm} \\
    \overline{\mathcal{D}}(k) = & \begin{bmatrix} \overline{A}^{k-1} & \ldots & \overline{A} & \overline{I}_n & \overline{0}_{n \times (N-k)n} \end{bmatrix} && \in \R^{n \times Nn}
\end{alignat}
\end{subequations}   

Consider the following simplification. 
\begin{assm}\label{assm:distdepend}
The concatenated disturbance, $\bvec{W}$, is a multivariate $t$ distributed random vector,
\begin{equation} \label{eq:dist_simplified}
    \bvec{W} \sim  t \left(\vec{0}, \overline{\Psi}, \nu \right) 
\end{equation}
where 
\begin{equation}
    \overline{\Psi} = \mathrm{diag}\left(\underbrace{\overline{\Sigma}, \dots, \overline{\Sigma}}_{N \text{ times}} \right)
\end{equation} 
\end{assm}
This simplifying assumption is still in the spirit of Assumption \ref{assm:disturbance} as we can recover the conditions by the marginal properties of the multivariate $t$ distribution (see Property \ref{prop:t_margin} in Section \ref{ssec:t_dist}). As discussed in Section \ref{ssec:t_dist}, this implies the additive noise variables are not independent. The scenarios in which heavy tailed phenomena tend to appear are by nature atypical. For instance, faulty thrusters may lead to state dependent disturbances.  In scenarios like these, the non-independent assumption will likely be valid.

Consider the planning context in which $v$ vehicles evolve in bounded region with dynamics \eqref{eq:system_one_step} with state $\bvec{x}_i(k)$ and concatenated input $\vec{U}_i$ for vehicle $i$. We presume each vehicle has a potentially time-varying desired target set and must maintain some distance from other vehicles as well as static objects in the environment. Each restriction must hold with desired likelihood,
\begin{subequations}\label{eq:constraints}
    \begin{align}
    \Prob \{\bvec{x}_i(k)  \in  \mathscr{T}_{i,k} \} & \geq  1-\alpha_{\mathscr{T}} \label{eq:constraint_t}\\
    \Prob\{\boldsymbol \| \overline{S} (\bvec{x}_i(k) -  \vec{o}) \| \geq r\} &\geq 1-\alpha_o, \: \forall i \in \N^{+}_{v} \label{eq:constraint_o}\\
    \Prob\{\boldsymbol \| \overline{S} (\bvec{x}_i(k) - \bvec{x}_j (k) )\| \geq r\} &\geq 1\!-\!\alpha_r, \: \forall i \!\neq\! j \in \N^{+}_{v} \label{eq:constraint_r}
    \end{align}
\end{subequations}
where  $ \mathscr{T}_{i,k} \subseteq \R^n$ are convex, compact, and polytopic sets, $\overline{S} = \begin{bmatrix} \overline{I}_q & \overline{0}_{q \times n-q} \end{bmatrix} $ is a matrix designed to extract positional elements from the state, $r \in \R_+$, $\overline{o} \in \R^n$ are static object locations, and $\alpha_{\mathscr{T}} \in (0,0.5]$,  $\alpha_{o}, \alpha_r, \in (0,1)$ are desired probabilistic violation thresholds.
\begin{assm} \label{assm:convex}
The probabilistic violation thresholds, $\alpha_{o}$ and $\alpha_r$, are small enough to maintain convexity of the proposed reformulation of \eqref{eq:constraints}.
\end{assm}
As discussed in Section \ref{ssec:beta_prime_dist} convexity of the problem will depend both on $\nu$ and $q$. If $\gamma \leq 1$, any value will suffice. However, establishing convexity can be challenging when $\gamma > 1$.

We seek to minimize a convex performance objective $J: \mathscr{X}^{N \times v} \times \mathscr{U}^{N \times v} \rightarrow \R$. 
\begin{subequations}\label{prob:big_prob_eq}
    \begin{align}
        \underset{\vec{U}_1, \dots, \vec{U}_v}{\mathrm{minimize}} \quad & J\left(
        \bvec{X}_1, \ldots, \bvec{X}_v,  \vec{U}_1, \dots, \vec{U}_v\right)  \\
        \mathrm{subject\ to} \quad  & \vec{U}_1, \dots, \vec{U}_v \in  \mathscr{U}^N,  \\
        & \text{Dynamics } \eqref{eq:system_k_step} \text{ with }
        \vec{x}_1(0), \dots, \vec{x}_v(0)\\
        & \text{Probabilistic constraints  \eqref{eq:constraints}} \label{eq:big_prob_eq_d}
    \end{align}
\end{subequations}
where $\bvec{X}_i = \begin{bmatrix} \bvec{x}_i^{\top}(1) & \ldots & \bvec{x}_i^{\top}(k) \end{bmatrix}^{\top}$ is the concatenated state vector for vehicle $i$.

Here, we address the following problem.
\begin{prob} \label{prob:1}
Solve the stochastic motion planning problem \eqref{prob:big_prob_eq} with open loop controllers $\vec{U}_1, \dots, \vec{U}_v \in \mathscr{U}^N$, for predetermined probabilistic violation thresholds $\alpha_{\mathscr{T}}, \alpha_o, \alpha_r$.
\end{prob}
We do so by solving two sub-problems.
\setcounter{prob}{1}
\begin{subprob} \label{prob:1.1}
Reformulate \eqref{eq:constraints} into a form that guarantees satisfaction and allows for convex optimization techniques.
\end{subprob}
\begin{subprob}\label{prob:1.2}
Determine minimum values for $\alpha_o$ and $\alpha_r$ that ensure convexity of Problem \ref{prob:1.1}.
\end{subprob}
The main challenge in solving Problem \ref{prob:1} is solving Problem \ref{prob:1.1}.

\section{Methods}\label{sec:methods}

We solve Problem \ref{prob:1} with standard risk allocation techniques \cite{ono2008iterative} in conjunction with quantile reformulations. We generate a piecewise affine approximation of the quantile via a Taylor series approximation in the convex region  of the quantile. We embed the piecewise affine approximation in a difference-of-convex functions framework to iteratively solve the reverse convex constraints to a local minimum. The difference of convex functions framework enable efficient optimization by quadratic programs. 

\begin{defn}[Reverse convex constraint]
A reverse convex constraint is the complement of a convex constraint, that is, $f(x) \geq c$ for a convex function $f: \R \rightarrow \R$ and a scalar $c \in \mathbb R$.
\end{defn}

\subsection{Reformulation of constraints} \label{ssub:reform}

We start by noting that each constraint in \eqref{eq:constraints} can be rewritten in the form,
\begin{subequations}\label{eq:prob_constraints}
\begin{align}
    \Prob \left\{ \bigcap_{i=1}^{n_c} f_i(\vec{x}(0), \vec{U}) + g_i \boldsymbol{y}_{i} \leq c_i \right\} & \geq 1-\alpha \label{eq:prob_constraints_convex} \\
    \Prob \left\{ \bigcap_{i=1}^{n_c} f_i(\vec{x}(0), \vec{U}) - g_i \boldsymbol{y}_{i} \geq c_i \right\} & \geq 1-\alpha \label{eq:prob_constraints_reverse_convex}
\end{align}
\end{subequations}
where $f: \mathscr{X} \times \mathscr{U}^N \rightarrow \R$ is convex in $\vec{U}$, $g_i \in \R_+$ is a positive scalar, $\boldsymbol{y}_{i}$ is a real and continuous random variable that is a function of the disturbance, and $n_c$ is the number of constraints that must jointly be satisfied. We presume $c_i$ is a constant and $\alpha$ is a predetermined probabilistic violation threshold. This form will allow us to allocate risk for each individual chance constraint and facilitate the reformulation into the a solvable form. We outline the reformulation of \eqref{eq:constraints} into the form of \eqref{eq:prob_constraints}.

\subsubsection{Target Constraint Reformulation}

Consider a target set constraint in the form \eqref{eq:constraint_t}. The polytopic construction implies there exists some matrix, $\overline{P} \in \R^{L \times n}$, and vector, $\vec{q} \in \R^L$ such that
\begin{equation}
    \bvec{x}(k) \in \mathscr{T}_{k} \equiv \overline{P} \bvec{x}(k) \leq \vec{q}_i.
\end{equation}
Using half-space form we can reformulate \eqref{eq:constraint_t} into a probabilistic constraint in the form of \eqref{eq:prob_constraints_convex}. We reformulate the target set constraint as 

\begin{subequations} \label{eq:target_reform}
\begin{align}
    \Prob \{\bvec{x}(k)  \in  \mathscr{T}_{k} \} 
    = & \;  \Prob \{\overline{P} \bvec{x}(k)  \leq \vec{q} \}  \\
    = & \;  \Prob \left\{ \bigcap_{i=1}^{L} \vec{P}_{i} \bvec{x}(k)  \leq q_{i} \right\}  \\
    = & \;  \Prob \left\{ \bigcap_{i=1}^{L} \vec{P}_{i} \left(\overline{A}^k \vec{x}(0) + \overline{\mathcal{C}}(k) \vec{U}_i + \overline{\mathcal{D}}(k) \bvec{W}_i\right)  \leq q_{i} \right\} \\
    = & \;  \Prob \Bigg\{ \bigcap_{i=1}^{L} \underbrace{\vec{P}_{i} \left(\overline{A}^k \vec{x}(0) + \overline{\mathcal{C}}(k) \vec{U} \right)}_{f_i(\vec{x}(0), \vec{U})} + \underbrace{\left(\vec{P}_{i} \overline{\mathcal{D}}(k) \overline{\Psi} \overline{\mathcal{D}}(k)^{\top} \vec{P}_{i}^{\top} \right)^{\frac{1}{2}}}_{g_i} \underbrace{\tau}_{\boldsymbol{y}_{i}} \leq \underbrace{q_{i}}_{c_i} \Bigg\}  \label{eq:target_reform_e}
\end{align}
\end{subequations}

where  $\tau \sim t(0, 1, \nu)$. Thus,
\begin{equation} \label{eq:target_equiv}
    \Prob \{\bvec{x}(k)  \in  \mathscr{T}_{k} \} \geq 1 - \alpha \Leftrightarrow \eqref{eq:target_reform_e} \geq 1 - \alpha
\end{equation}
where the random variable $\tau$ has a univariate Student's $t$ distribution.

\subsubsection{Collision Avoidance Constraint Reformulation}

We consider the probabilistic collision avoidance between two vehicles, $i$ and $j$, respectively as being the 2-norm distance being greater than or equal to a predetermined distance, $r$, with probability of at least $1-\alpha$,
\begin{equation} \label{eq:ref_avoid}
    \Prob \{ \| \overline{S} (\bvec{x}_i(k) - \bvec{x}_j(k))\| \geq r \} \geq 1 - \alpha
\end{equation}

We reformulate \eqref{eq:ref_avoid} as
\begin{subequations}\label{eq:collision_reform}
\begin{align}
    & \; \Prob \left\{    \left\| \overline{S} \left(\bvec{x}_i(k) - \bvec{x}_j(k)\right)\right\| \geq r \right\} \label{eq:collision_reform_a} \\
    = & \; \Prob \left\{   \left\| \overline{S} \left(\overline{A}^k \vec{x}_{i- j}(0)  +  \overline{\mathcal{C}}(k) \vec{U}_{i- j} + \overline{\mathcal{D}}(k) \bvec{W}_{i} - \overline{\mathcal{D}}(k) \bvec{W}_{j} \right)\right\| \geq r\right\}  \\
    = & \; \Prob \left\{   \left\| \overline{S} \left(\overline{A}^k \vec{x}_{i- j}(0)  +  \overline{\mathcal{C}}(k) \vec{U}_{i- j}\right)  + \left( \overline{S} \overline{\mathcal{D}}(k)\Psi \overline{\mathcal{D}}(k)^{\top} \overline{S}^{\top} \right)^{\frac{1}{2}} \left(\bvec{\tau}_{i}  -  \bvec{\tau}_{j}\right) \right\| \geq r\right\}  \\
    \geq & \; \Prob \left\{   \left\| \overline{S} \left(\overline{A}^k \vec{x}_{i- j}(0)  +  \overline{\mathcal{C}}(k) \vec{U}_{i- j}\right)\right\|  -  \left\| \left( \overline{S} \overline{\mathcal{D}}(k)\Psi \overline{\mathcal{D}}(k)^{\top} \overline{S}^{\top} \right)^{\frac{1}{2}} \left(\bvec{\tau}_{i}  -  \bvec{\tau}_{j}\right) \right\| \geq r \right\}    \label{eq:collision_reform_d}\\
    \geq & \; \Prob \left\{   \left\| \overline{S} \left(\overline{A}^k \vec{x}_{i- j}(0)  +  \overline{\mathcal{C}}(k) \vec{U}_{i- j}\right)\right\|  - \sqrt{\lambda_{\max}\left( \overline{S} \overline{\mathcal{D}}(k)\Psi \overline{\mathcal{D}}(k)^{\top} \overline{S}^{\top} \right)}\left\|   \left(\bvec{\tau}_{i}  -  \bvec{\tau}_{j}\right) \right\| \geq  r \right\}  \label{eq:collision_reform_e}\\
    \geq & \; \Prob \left\{   \left\| \overline{S} \left(\overline{A}^k \vec{x}_{i- j}(0)  +  \overline{\mathcal{C}}(k) \vec{U}_{i- j}\right)\right\|  - \sqrt{2\lambda_{\max}\left( \overline{S} \overline{\mathcal{D}}(k)\Psi \overline{\mathcal{D}}(k)^{\top} \overline{S}^{\top} \right)} \sqrt{\left\|\bvec{\tau}_{i}\right\|^2 + \left\| \bvec{\tau}_j \right\|^2} \geq  r \right\}  \label{eq:collision_reform_f}\\
    = & \; \Prob \left\{  \underbrace{ \left\| \overline{S} \left(\overline{A}^k \vec{x}_{i- j}(0)  +  \overline{\mathcal{C}}(k) \vec{U}_{i- j}\right)\right\|}_{f_i(\vec{x}(0), \vec{U})}  - \underbrace{\sqrt{2 \nu \lambda_{\max}\left( \overline{S} \overline{\mathcal{D}}(k)\Psi \overline{\mathcal{D}}(k)^{\top} \overline{S}^{\top} \right)}}_{g_i} \underbrace{ \sqrt{\frac{1}{\nu}\left(\left\|\bvec{\tau}_{i}\right\|^2 + \left\| \bvec{\tau}_{j} \right\|^2 \right)}}_{\boldsymbol{y}_{i}} \geq  \underbrace{r}_{c_i} \right\} \label{eq:collision_reform_g}
\end{align}
\end{subequations}
where $\boldsymbol{\tau}_i \sim t(\overline{0}, I_{q}, \nu)$. Here \eqref{eq:collision_reform_d} employs the reverse triangle inequality, \eqref{eq:collision_reform_e} employs the variational properties of matrices, and \eqref{eq:collision_reform_f} employs the parallelogram law. Satisfaction of \eqref{eq:collision_reform_g} implies satisfaction of \eqref{eq:collision_reform_a}. Here, 
\begin{subequations} \label{eq:Q_dist}
\begin{align}
    \boldsymbol{y}_{i}^2 \sim & BPrime (\gamma, \delta)\\
    \gamma = & \frac { 2q \left(\frac{q}{2} +\left(\frac{\nu}{2}\right) ^{2}-\nu +\left(\frac{q\nu}{2}\right)-2q+1 \right)}{\left(\nu-2\right)\left(q+\nu-2\right)}\\
    \delta = & \frac {2\left(-q +\left(\frac{\nu}{2}\right) ^{2}-\left(\frac{\nu}{2}\right) + \left(\frac{q\nu}{2}\right) \right)}{q+\nu-2} 
\end{align}
\end{subequations}
by Properties \ref{prop:beta_prime_norm} and \ref{prop:beta_prime_sum} in the Appendix. We recover the pdf of $\boldsymbol{y}_{i}$ as 
\begin{equation}
    \phi_{\boldsymbol{y}_{i}}(x) = 2x\phi_{\boldsymbol{y}_{i}^2}(x^2)
\end{equation}
to be used in the quantile approximation in Section \ref{ssec:quantile_approx}.

We note that the reformulation of \eqref{eq:constraint_o} follows similar to \eqref{eq:collision_reform}. Since the object is static in the environment, we skip \eqref{eq:collision_reform_f} to get the probabilistic constraint into the form of a known distribution. The resulting distribution is $BPrime \left(\frac{q}{2}, \frac{\nu}{2}\right)$. 

Hence, we can conservatively approximate solutions to Problem \ref{prob:1} by rewriting it in terms of \eqref{eq:prob_constraints}.
\begin{prob} \label{prob:2}
Solve the optimization problem
\begin{subequations}
    \begin{align}
        \underset{\vec{U}_1, \dots, \vec{U}_v}{\mathrm{minimize}} \quad & J\left(
        \bvec{X}_1, \ldots, \bvec{X}_v,  \vec{U}_1, \dots, \vec{U}_v\right)  \\
        \mathrm{subject\ to} \quad  & \vec{U}_1, \dots, \vec{U}_v \in  \mathscr{U}^N,  \\
        & \text{Dynamics } \eqref{eq:system_k_step} \text{ with }
        \vec{x}_1(0), \dots, \vec{x}_v(0)\\
        & \text{Probabilistic constraints  \eqref{eq:prob_constraints}} \label{eq:prob2_final}
    \end{align}
\end{subequations}
with open loop controllers $\vec{U}_1, \dots, \vec{U}_v \in \mathscr{U}^N$, for predetermined probabilistic violation thresholds.
\end{prob}

This new optimization problem differs from \eqref{prob:big_prob_eq} in the formulation of the chance constraints. While this is a small change, it facilitates a reformulation that eliminates the need to evaluate high dimensional integrals for computing the probabilities of constraint satisfaction.

\begin{lem} \label{lem:1}
Any solution to Problem \ref{prob:2} is a sub-optimal solution to Problem \ref{prob:1}.
\end{lem}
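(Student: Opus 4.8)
The plan is to prove the claim by a feasible-set containment argument. Since Problem~\ref{prob:1} and Problem~\ref{prob:2} share the identical objective $J$, the same admissible input set $\mathscr{U}^N$, and the same dynamics~\eqref{eq:system_k_step}, the two problems differ only in their probabilistic constraint blocks. It therefore suffices to show that every control tuple $(\vec{U}_1, \dots, \vec{U}_v)$ feasible for the reformulated constraints~\eqref{eq:prob_constraints} is also feasible for the original constraints~\eqref{eq:constraints}; this yields the inclusion $\mathcal{F}_2 \subseteq \mathcal{F}_1$ of the respective feasible sets, from which the result follows immediately for a minimization problem.

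First I would dispatch the target-set constraints. The reformulation~\eqref{eq:target_reform} is a sequence of exact identities, terminating in the equivalence~\eqref{eq:target_equiv}; hence feasibility of the convex constraint~\eqref{eq:prob_constraints_convex} is \emph{equivalent} to feasibility of the original~\eqref{eq:constraint_t}, introducing no conservatism. Next I would handle the collision-avoidance and static-obstacle constraints, which is where the containment becomes strict. Reading the chain~\eqref{eq:collision_reform} in the direction established there, the reverse triangle inequality~\eqref{eq:collision_reform_d}, the variational (maximum-eigenvalue) bound~\eqref{eq:collision_reform_e}, and the parallelogram law~\eqref{eq:collision_reform_f} each replace the underlying event by a subset event, so the reformulated probability~\eqref{eq:collision_reform_g} is a lower bound for the true satisfaction probability~\eqref{eq:collision_reform_a}. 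Thus if~\eqref{eq:collision_reform_g} holds at level $1-\alpha$, then so does~\eqref{eq:collision_reform_a}, exactly as noted in the text; the same verbatim argument covers the obstacle constraint~\eqref{eq:constraint_o} via its analogous reformulation into $BPrime(\tfrac{q}{2}, \tfrac{\nu}{2})$.

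Combining the two cases gives $\mathcal{F}_2 \subseteq \mathcal{F}_1$. Consequently any optimizer $(\vec{U}_1^\star, \dots, \vec{U}_v^\star)$ of Problem~\ref{prob:2} lies in $\mathcal{F}_1$, so it is feasible for Problem~\ref{prob:1} and attains an objective value no smaller than the optimal value of Problem~\ref{prob:1}; that is, it is a feasible but generally sub-optimal solution to Problem~\ref{prob:1}.

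The proof is essentially a corollary of the reformulation already carried out, so I do not expect a substantial obstacle. The only point requiring care is confirming that every inequality in~\eqref{eq:collision_reform} runs in the tightening direction (shrinking the feasible set rather than enlarging it), since it is precisely this one-directional implication---as opposed to the two-sided equivalence enjoyed by the target constraint---that accounts for the possible loss of optimality rather than exact recovery of Problem~\ref{prob:1}.
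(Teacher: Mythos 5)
Your proposal is correct and follows essentially the same route as the paper: it notes that the target-set reformulation is an exact equivalence while the collision-avoidance chain is a one-directional tightening, and concludes that feasibility for Problem~2 implies feasibility for Problem~1. The only difference is that you spell out the feasible-set containment $\mathcal{F}_2 \subseteq \mathcal{F}_1$ and its consequence for the optimal values explicitly, which the paper leaves implicit.
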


\begin{proof}
By setting the probabilistic threshold in \eqref{eq:target_equiv} to $\alpha_\mathscr{T}$ satisfaction of \eqref{eq:target_reform_e} is equivalent to \eqref{eq:constraint_t}. The reformulation \eqref{eq:collision_reform} is a tightening of the constraints \eqref{eq:constraint_o}-\eqref{eq:constraint_r} when we set $\alpha$ in \eqref{eq:ref_avoid} to $\alpha_o$ and $\alpha_r$, respectively. Therefore, satisfaction of \eqref{eq:prob2_final} implies satisfaction of \eqref{eq:big_prob_eq_d} 
\end{proof}

Lemma \ref{lem:1} dictates that \eqref{eq:prob_constraints} are conservative bounds for \eqref{eq:constraints}. How conservative each bound is will vary based on the bounding mechanisms used. For example, the target set constraint reformulation \eqref{eq:target_reform} created a tight bound as we are simply using the affine properties of the distribution. However, the collision avoidance constraint reformulation \eqref{eq:collision_reform} introduces significant conservatism through the the reverse triangle inequality and the parallelogram law. Empirically, we have observed solutions being 10-20\% more conservative when comparing violation thresholds to Monte Carlo satisfaction rates \cite{PrioreACC21, prioreACC22}.

\subsection{Constraint satisfaction via quantiles}

To solve Problem \ref{prob:2}, we employ the quantile reformulation with a standard risk allocation framework via Boole's inequality \cite{casella2002}. Here, we derive the quantile form for \eqref{eq:prob_constraints_reverse_convex} and note that the reformulation for \eqref{eq:prob_constraints_convex} is analogous. We start by taking the complement of \eqref{eq:prob_constraints_reverse_convex} via the complementary properties of probabilities and De Morgan's law \cite{casella2002},
\begin{equation}
    \Prob \left\{ \bigcup_{i=1}^{n_c} f_i(\vec{x}(0), \vec{U}) - g_i \boldsymbol{y}_{i} \leq c_i \right\} \leq \alpha
\end{equation}
We upper bound this probability via Boole's inequality,
\begin{equation}
    \Prob \left\{\bigcup_{i=1}^{n_c} f_i(\vec{x}(0), \vec{U}) - g_i \boldsymbol{y}_{i} \leq c_i \right\} \leq \sum_{i=1}^{n_c} \Prob \left\{f_i(\vec{x}(0), \vec{U}) - g_i \boldsymbol{y}_{i} \leq c_i \right\} \label{eq:boole_right}
\end{equation}
Using the approach in \cite{ono2008iterative}, we introduce variables $\eta_{i}$ to allocate risk to each of the individual probabilities
\begin{subequations}\label{eq:quantile_reform_new_var}
\begin{align}
     \Prob \left\{ f_i(\vec{x}(0), \vec{U}) - g_i \boldsymbol{y}_{i} \leq c_i  \right\} &\leq \eta_{i} \label{eq:quantile_orig} \\
     \sum_{i=1}^{n_c} \eta_{i} &\leq \alpha \label{eq:quantile_reform_new_var_2}\\
     \eta_{i} & \geq 0 \label{eq:quantile_reform_new_var_3}
\end{align}
\end{subequations}
By inverting the argument of \eqref{eq:quantile_orig}, we obtain
\begin{equation}
    \Prob \left\{ \boldsymbol{y}_{i} \leq \frac{1}{g_i}\left(f_i(\vec{x}(0), \vec{U}) -c_i\right) \right\} \geq 1-\eta_{i} 
    \Leftrightarrow \frac{1}{g_i}\left(f_i(\vec{x}(0), \vec{U})-c_i \right)  \geq \Phi^{-1}_{\boldsymbol{y}_{i}}\left(1-\eta_{i} \right)
\label{eq:quantile_opt}
\end{equation}
Rearranging \eqref{eq:quantile_opt}, we obtain
\begin{equation} \label{eq:reformed_rev}
     f_i(\vec{x}(0), \vec{U}) - g_i\left(\Phi^{-1}_{\boldsymbol{y}_{i}}\left(1-\eta_{i} \right)\right)  \geq c_i 
\end{equation}
which is a reverse convex constraint in $\vec{U}$.

We note that reformulation of \eqref{eq:prob_constraints_convex} deviates from \eqref{eq:reformed_rev} only by making the signs of $f_i(\vec{x}(0), \vec{U})$ and $c_i$ negative. Hence, we get the series of inequalities
\begin{subequations}\label{eq:quantile_final}  
\begin{align} 
     f_i(\vec{x}(0), \vec{U})  & \leq c_i - g_i \left(\Phi^{-1}_{\boldsymbol{y}_{i}}\left(1-\upsilon_{i} \right)\right) \label{eq:quantile_reform_final_1}\\
     \sum_{i=1}^{n_c} \upsilon_{i} &\leq \alpha  \label{eq:quantile_reform_final_2}\\
     \upsilon_{i} & \geq 0 \label{eq:quantile_reform_final_3}\\ 
     f_i(\vec{x}(0), \vec{U})   &\geq c_i +g_i \left(\Phi^{-1}_{\boldsymbol{y}_{i}}\left(1-\eta_{i} \right)\right) \label{eq:quantile_reform_final_4}\\
     \sum_{i=1}^{n_c} \eta_{i} &\leq \alpha \label{eq:quantile_reform_final_5}\\
     \eta_{i} & \geq 0\label{eq:quantile_reform_final_6}
\end{align}
\end{subequations}
that can be substituted into Problem \ref{prob:2} in place of \eqref{eq:prob2_final}.

\begin{lem} \label{lem:2}
For the controller $\vec{U}_1, \dots, \vec{U}_v$, if there exists risk allocation variables $\upsilon_{i}$ satisfying \eqref{eq:quantile_reform_final_2}-\eqref{eq:quantile_reform_final_3} for constraints in the form of \eqref{eq:quantile_reform_final_1} and risk allocation variables $\eta_{i}$ satisfying \eqref{eq:quantile_reform_final_5}-\eqref{eq:quantile_reform_final_6} for constraints in the form of \eqref{eq:quantile_reform_final_4}, then $\vec{U}_1, \dots, \vec{U}_v$ satisfy \eqref{eq:prob2_final}.
\end{lem}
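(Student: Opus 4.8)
The plan is to run the chain of manipulations in Section~\ref{sec:methods} in the forward direction, showing that the deterministic quantile constraints \eqref{eq:quantile_final} are sufficient for the joint probabilistic constraints \eqref{eq:prob2_final}. I would treat the reverse-convex family \eqref{eq:quantile_reform_final_4}--\eqref{eq:quantile_reform_final_6} in detail; the convex family \eqref{eq:quantile_reform_final_1}--\eqref{eq:quantile_reform_final_3} is symmetric, differing only in the signs of $f_i$ and $c_i$ as already noted before the lemma, so I would dispatch it by appeal to symmetry.

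First I would establish the per-constraint equivalence. Fixing a reverse-convex constraint $i$ together with its risk variable $\eta_i \geq 0$, I rearrange \eqref{eq:quantile_reform_final_4} to $\frac{1}{g_i}\left(f_i(\vec{x}(0),\vec{U}) - c_i\right) \geq \Phi^{-1}_{\boldsymbol{y}_i}(1-\eta_i)$, which is exactly the right-hand side of \eqref{eq:quantile_opt}. Because $g_i \in \R_+$ and $\boldsymbol{y}_i$ is a real, continuous random variable, monotonicity of the quantile function (equivalently, of the cdf) delivers the equivalence in \eqref{eq:quantile_opt}, namely $\Prob\{\boldsymbol{y}_i \leq \frac{1}{g_i}(f_i - c_i)\} \geq 1-\eta_i$, which rearranges back to the single-constraint bound \eqref{eq:quantile_orig}, $\Prob\{f_i - g_i \boldsymbol{y}_i \leq c_i\} \leq \eta_i$. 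Continuity of $\boldsymbol{y}_i$ is precisely what allows me to pass freely between strict and non-strict inequalities at the threshold, since the boundary event $\{\boldsymbol{y}_i = \frac{1}{g_i}(f_i - c_i)\}$ carries zero probability.

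Next I would aggregate over the constraints. Summing \eqref{eq:quantile_orig} across $i = 1,\dots,n_c$ and invoking the risk budget \eqref{eq:quantile_reform_final_5} gives $\sum_{i=1}^{n_c} \Prob\{f_i - g_i \boldsymbol{y}_i \leq c_i\} \leq \sum_{i=1}^{n_c} \eta_i \leq \alpha$. Boole's inequality \eqref{eq:boole_right} then bounds the union, $\Prob\{\bigcup_{i} (f_i - g_i \boldsymbol{y}_i \leq c_i)\} \leq \alpha$, and taking complements together with De~Morgan's law converts this into $\Prob\{\bigcap_{i} (f_i - g_i \boldsymbol{y}_i \geq c_i)\} \geq 1-\alpha$, which is exactly \eqref{eq:prob_constraints_reverse_convex}. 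Running the identical argument for the convex family with the variables $\upsilon_i$ recovers \eqref{eq:prob_constraints_convex}, and the two together constitute \eqref{eq:prob2_final}, completing the proof.

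I expect no deep obstacle, since the argument is essentially a reversal of the derivation already carried out; the single point requiring care is the direction of each step. Boole's inequality is used only in its natural forward direction---the deterministic constraints furnish per-constraint probability bounds whose sum lies below $\alpha$, and Boole turns these into a union bound---so no inversion of an inequality is ever needed. The only genuine equivalence I rely on is the quantile inversion \eqref{eq:quantile_opt}, and I would be explicit that it requires both $g_i > 0$ (to preserve the inequality direction upon dividing) and continuity of $\boldsymbol{y}_i$ (to make the cdf--quantile correspondence exact at the threshold); everything else flows forward from \eqref{eq:quantile_final} to \eqref{eq:prob2_final}.
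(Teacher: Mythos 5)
Your proposal is correct and follows essentially the same route as the paper's proof: invert each quantile constraint back to the per-constraint probability bound \eqref{eq:quantile_orig}, sum against the risk budget, and apply Boole's inequality with De Morgan's law to recover \eqref{eq:prob2_final}. The paper states this in two sentences; you simply make explicit the intermediate steps (the role of $g_i>0$, continuity of $\boldsymbol{y}_i$, and the strict-versus-non-strict inequality at the threshold), which is a faithful elaboration rather than a different argument.
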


\begin{proof}
Satisfaction of \eqref{eq:quantile_reform_final_2}-\eqref{eq:quantile_reform_final_3} and \eqref{eq:quantile_reform_final_5}-\eqref{eq:quantile_reform_final_6} implies \eqref{eq:boole_right} meets the probabilistic violation threshold of $\alpha$. Boole's inequality and De Morgan's laws guarantee \eqref{eq:prob2_final} is satisfied.
\end{proof}

The constraint \eqref{eq:quantile_reform_final_1} is convex in $\vec{U}$ and $\upsilon_{i}$, however \eqref{eq:quantile_reform_final_4} is reverse convex in $\vec{U}$ and convex in $\eta_{i}$.  
Additionally, while Assumption \ref{assm:convex} guarantees the convexity of \eqref{eq:quantile_reform_final_1} and \eqref{eq:quantile_reform_final_4} in $\upsilon_{i}$ and $\eta_{i}$, respectively, the expressions $\Phi^{-1}_{\boldsymbol{y}_{i}}(1-\upsilon_{i})$ and $\Phi^{-1}_{\boldsymbol{y}_{i}}(1-\eta_{i})$ are non-conic.

\subsection{Numerical quantile approximation} \label{ssec:quantile_approx}

To make the expressions $\Phi^{-1}_{\boldsymbol{y}_{i}}(1-\upsilon_{i})$ and $\Phi^{-1}_{\boldsymbol{y}_{i}}(1-\eta_{i})$ amenable to standard convex and conic techniques we generate a piece-wise affine approximation such that
\begin{equation}
    \Phi^{-1}_{\boldsymbol{y}_{i}}(1-\eta_{i}) \approx \max_{q \in \N_{l^{\ast}}^{+}} (m_{iq} \, \eta_{i} + c_{iq})
\end{equation}
where $m_{iq}$ is the piece-wise slope, $c_{iq}$ is the piece-wise intercept, and $l^\ast$ is the cardinality of the set $\mathscr{Q} = \{\{m_{i1},  c_{i1}\}, \ldots, \{m_{il^{\ast}},  c_{il^{\ast}}\} \}$. This piece-wise affine form allows up to select the convex position of the quantile and enbed it in the optimization problem in place of the potentially non-conic quantile expression. To generate the set $\mathscr{Q}$ we must first compute a series of points to linearize between. The quantile for the Student's $t$ distribution used in the target set constraint does not have an analytic expression as it requires the inversion of the Gauss hypergeometric function. Evaluating the quantile of the beta prime distribution used for collision avoidance constraints will require the inversion the incomplete beta function. In many cases, the incomplete beta function will not have an analytic form making the inversion impossible. The standard method for computing quantiles for the $t$ distribution is through nested summation over a series of cosines and has differing implementations for even or odd $\nu$ \cite{Hill1970}. The quantile of the beta prime distribution is based on computation of the beta quantile with a modified Newton-Raphson iterative method \cite{Cran1977}. Both approximations are highly tailored to their respective distributions but lack the generality needed to handle potentially varying parameterizations. To generate the series of points required to linearize each distributions, we opt for an alternative method that is amenable to both distributions and can accommodate potentially varying parameterizations. The approach relies on a Taylor series approximation of the quantile \cite{Yu2017}.

For a random variable $\boldsymbol{y}_i$, and a known quantile evaluation point $\boldsymbol{y}_i^{\ast} = \Phi^{-1}_{\boldsymbol{y}_i}(p_{0})$ for $p_0 \in (0,1)$, \cite{Yu2017} proposes an iterative process that evaluates a Taylor series expansion of $n_d$ terms at points that are an interval $h \in \mathbb R_+$ apart. A quantile approximation at $p_{c+1} = p_c + h$ is described by
\begin{equation}\label{eq:taylor_quantile}
    \hat{\Phi}^{-1}_{\boldsymbol{y}_i}(p_{c+1}) =  \Phi^{-1}_{\boldsymbol{y}_i}(p_c) + \sum_{d=1}^{n_d} (-1)^{d}\left. \frac{\partial^d \Phi^{-1}_{\boldsymbol{y}_i}(p)}{(\partial \kappa)^d} \right|_{p = p_{c}} \cdot \frac{\log(p_{c+1}/p_c)^d}{d!}
\end{equation}  
where $\kappa =- \log(p)$ is a variable substitution used for numerical tractability.  Typically, $n_d=3$ or 4 is sufficient to generate an approximation with small errors, and steps $c$ are computed until a terminating percentile is reached. Derivatives of the quantile are obtained via the inverse function theorem,
\begin{equation}
    \frac{\partial}{\partial \kappa} \Phi^{-1}_{\boldsymbol{y}_i}(p) = -\frac{e^{-\kappa}}{\phi_{\boldsymbol{y}_i} \left(\Phi^{-1}_{\boldsymbol{y}_i}(p) \right)}
\end{equation}
where the $i$\textsuperscript{th} derivative will elicit the the $i-1$\textsuperscript{th} derivative of $\phi_{\boldsymbol{y}_i} (\cdot)$. Analytical expressions for the first four derivatives are provided in \cite{Yu2017}.

As the Student's $t$ distribution is symmetric about 0, we can easily define the instantiation point as $\boldsymbol{y}_{i}^{\ast}=0$ for $p_0 = 0.5$. As the beta prime has a skewed distribution, establishing an instantiation point becomes more involved. For instances where $\gamma=1$ or $\delta=1$, we can derive analytical expressions for the median as
\begin{subequations}
\begin{align}
    I_{\frac{0.5}{1+0.5}}(1, \delta) = & \; 2^{\frac{1}{\delta}} - 1 \label{eq:beta_prime_med_a}\\
    I_{\frac{0.5}{1+0.5}}(\gamma, 1) = & \; \left(2^{\frac{1}{\gamma}} - 1\right)^{-1} 
\end{align}
\end{subequations}
where $I_{\frac{0.5}{1+0.5}}(\gamma, \delta)$ is the median of the beta prime distribution. Similarly, when $\gamma=\delta$, the median is $1$.  For all other cases, we will need to approximate the median. By construction, a beta prime random variable can be constructed as the quotient of two independent gamma random variables. We can approximate the median by looking at the ratio of the medians of the gamma random variables \cite{kerman2011}. Several approximations of the median of a gamma random have been proposed for the case when $\gamma \neq \delta \neq 1$ \cite{Berg2006, GAUNT2021, Lyon2021}. Most notable approximations have an maximum absolute error near 2\%. The optimal choice of approximation will depend highly on the relationship between $\gamma$ and $\delta$. Note, the variable of interest is the square root of a beta prime distributed random variable. It can be shown that the median of this new random variable is the square root of the median of the beta prime random variable.

For an accurate instantiation point, error in the approximation 
\begin{equation}\label{eq:taylor_error_eq}
    \epsilon = \Phi^{-1}_{\boldsymbol{y}_{i}}(\cdot) - \hat{\Phi}^{-1}_{\boldsymbol{y}_{i}}(\cdot) 
\end{equation}  
is characterized by the unused Taylor series terms, such that
\begin{equation} \label{eq:taylor_error}
    \epsilon \in O\left([h / \min (p_{l-1}, p_0)]^{n_d}\right)
\end{equation} 
so that $\epsilon$ converges to $0$ as $h \rightarrow 0$ and $n_d \rightarrow \infty$ \cite{Yu2017}.

We compute the set $\{m_{iq},  c_{iq}\}$ to connect evaluation points of the quantile approximation.  However, to ensure a reasonable number of variables and constraints in the optimization, we selectively choose evaluation points, rather than connecting all points.  Given an error threshold, $\xi$, we seek a subset of $l^\ast$ affine terms, such that 
\begin{equation} \label{eq:quantile_over_error}
    \hat{\Phi}^{-1}_{\boldsymbol{y}_i}(p_c) \leq \max_{q \in \N_{l^{\ast}}^{+}} (m_{iq} \, \eta_{i} + c_{iq}) \leq \hat{\Phi}^{-1}_{\boldsymbol{y}_i}(p_c) + \xi 
\end{equation}
for all $p_c$, as shown in Figure \ref{fig:approx_graph}. We propose Algorithm \ref{algo:PWA} to compute the reduced set $\{m_{iq}, c_{iq} \}$.  Note that although the error threshold, $\xi$, is formulated with respect to the approximation (not the true quantile), as $\epsilon \rightarrow 0$, the convexity of the Student's $t$ quantile in the range $[0.5,1)$ guarantees that \eqref{eq:quantile_over_error} becomes an affine overapproximation of the true quantile. Similarly, Assumption \ref{assm:convex} guarantees that \eqref{eq:quantile_over_error} becomes an affine overapproximation for an accurate instantiation point for the beta prime distribution. 

To ensure Assumption \ref{assm:convex}, we can look at the relationship between the median approximation and the mode for instances where $\gamma > 1$. If the median is greater than the mode, $\alpha_o$ and $\alpha_r$ can take on any value greater than 0.5. If however, the median if less than the mode, we can only use our approximation in the region where the computed values are greater than the mode. Since the values are computed iteratively, checking violation thresholds against the lower bound on the convex region of the approximation is straight forward. 

\begin{figure}
\centering
\includegraphics[width=0.4\linewidth]{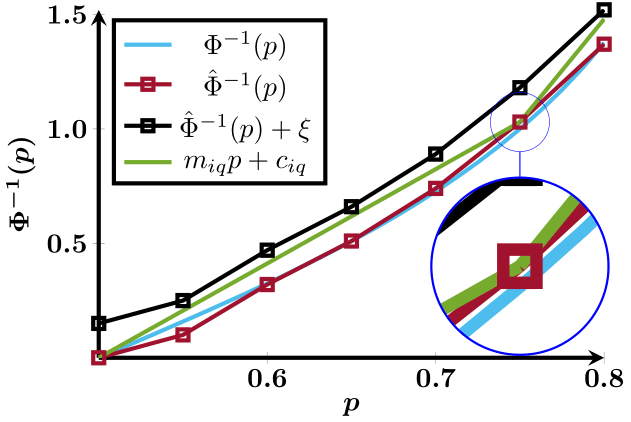}
\caption{Quantile approximation method applied to a Cauchy distribution.  The blue line represents the true quantile, the red points result from a Taylor series approximation (\ref{eq:taylor_quantile}), the black points show the error threshold $\xi$, and the green lines represent the affine approximation (\ref{eq:quantile_over_error}). 
}
\label{fig:approx_graph}
\end{figure}

\begin{algorithm}
  \caption{Computing $\{m_{iq}, c_{iq}\}$ from $\phi_{\boldsymbol{y}_i}$}
	\label{algo:PWA}
	\textbf{Input}: The pdf of $\boldsymbol{y}_i$, $\phi_{\boldsymbol{y}_i}$, and its derivatives $\phi^{'}_{\boldsymbol{y}_i}, \ldots, \phi^{(n)}_{\boldsymbol{y}_i}$, instantiating point $p_0$, termination point $p_l$, known quantile $\boldsymbol{y}_i^{\ast} = \Phi^{-1}_{\boldsymbol{y}_i}(p_0)$, step size $h$, and maximum error threshold $\xi$.
	\\
	\textbf{Output}: Affine terms of $\hat\Phi^{-1}_{\boldsymbol{y}_i}$, $\{m_{ijq}, c_{ijq}\}$
	\begin{algorithmic}[1]
	\For{$p_i = p_0+h$ \textbf{to} $p_l$ \textbf{by} $h$}
	\State $\mathcal P_i \gets \hat{\Phi}^{-1}(p_i)$ \Comment{Via \eqref{eq:taylor_quantile}}
	\EndFor
	\State	$i \gets 0$
	\For{$j = l$ \textbf{to} $i+1$ \textbf{by} $-1$}
	\State $m \gets \frac{\mathcal P_j-\mathcal P_i}{h(j-i)}$
	\State $c \gets \mathcal P_i - p_i \times m$
	\For{$y=i+1$ \textbf{to} $l-1$ \textbf{by} $1$}
	\State $\epsilon_y = \mathcal P_y - (p_y \times m + c)$
	\If{$\epsilon_y > \xi$}
	\State \textbf{Next} $j$
	\EndIf
	\EndFor
	\State $\{m_{iq}, c_{iq}\} \gets m, c$
	\State $i \gets j$
	\State Go to step $6$
	\EndFor
  \end{algorithmic}
\end{algorithm}

We reformulate \eqref{eq:quantile_reform_final_1} with the piecewise affine approximation \eqref{eq:quantile_over_error}, as
\begin{subequations}\label{eq:Num_quantile_reform}
\begin{align}
     f_i(\vec{x}(0), \vec{U})  &\leq c_i - \frac{1}{g_i}\left(s_{i}\right) & \label{eq:Num_quantile_reform_1}\\
     s_{i} & \geq m_{iq} \, \eta_{i} + c_{iq} & \forall q \in \N_{l^\ast}^{+}\\
     \sum_{j=1}^{v}\sum_{i=1}^{n_j} \eta_{i} &\leq \alpha & \\
     \eta_{i} & \geq 0 &
\end{align}
\end{subequations}
with slack variables $s_{ij}$.  A similar reformulation can be posed for \eqref{eq:quantile_reform_final_4}.
In the limit, as \eqref{eq:quantile_over_error} becomes an affine overapproximation of $\Phi^{-1}_{\boldsymbol{y}_i}(\cdot)$, \eqref{eq:Num_quantile_reform} is a tightening of \eqref{eq:quantile_final} and Assumption \ref{assm:convex} ensures the convexity of \eqref{eq:Num_quantile_reform}. 

\begin{lem} \label{lem:3}
Assume the instantiation point of \eqref{eq:taylor_quantile} is exact, i.e. $\Phi^{-1}(0.5)=0$ for both the standard Gaussian and Student's $t$ distribution. For a controller $\vec{U}_1, \dots, \vec{U}_v$, if there exists risk allocation variables $\eta_{i}$ and $\upsilon_{i}$, and slack variables $s_{i}$ satisfying \eqref{eq:Num_quantile_reform}, then $\vec{U}_1, \dots, \vec{U}_v$ asymptotically satisfies \eqref{eq:prob2_final} as $h \rightarrow 0$ and $n_d \rightarrow \infty$.
\end{lem}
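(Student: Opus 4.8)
The plan is to reduce Lemma~\ref{lem:3} to Lemma~\ref{lem:2}: I would show that, in the limit $h\to 0$ and $n_d\to\infty$, any feasible point of the approximate program \eqref{eq:Num_quantile_reform} is feasible for the exact quantile reformulation \eqref{eq:quantile_final}, whereupon Lemma~\ref{lem:2} immediately delivers \eqref{eq:prob2_final}. The only place where \eqref{eq:Num_quantile_reform} differs from \eqref{eq:quantile_final} is that it replaces the exact quantile term $\Phi^{-1}_{\boldsymbol{y}_i}(1-\eta_i)$ by a slack $s_i$ lower-bounded by the piecewise-affine surrogate $\max_{q}(m_{iq}\eta_i+c_{iq})$. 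It therefore suffices to prove that, asymptotically, this surrogate overapproximates the true quantile, $\max_q(m_{iq}\eta_i+c_{iq})\geq \Phi^{-1}_{\boldsymbol{y}_i}(1-\eta_i)$; then $s_i\geq \Phi^{-1}_{\boldsymbol{y}_i}(1-\eta_i)$, and substitution into \eqref{eq:Num_quantile_reform_1} recovers the exact constraint \eqref{eq:quantile_reform_final_1} (analogously for the $\upsilon_i$ block and \eqref{eq:quantile_reform_final_4}), while the risk-budget and nonnegativity constraints on $\eta_i,\upsilon_i$ transfer verbatim.

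First I would isolate the two error sources separating the surrogate from the true quantile: the Taylor truncation error $\epsilon=\Phi^{-1}_{\boldsymbol{y}_i}-\hat\Phi^{-1}_{\boldsymbol{y}_i}$ from \eqref{eq:taylor_quantile}, and the fitting tolerance $\xi$ of the piecewise-affine surrogate about the Taylor samples in \eqref{eq:quantile_over_error}. The hypothesis that the instantiation point is exact, $\Phi^{-1}(0.5)=0$ for the standard Gaussian and Student's $t$, removes the seed error from the expansion, so the bound \eqref{eq:taylor_error} gives $\epsilon\in O\!\left([h/\min(p_{l-1},p_0)]^{n_d}\right)\to 0$ as $h\to 0$ and $n_d\to\infty$. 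Hence the sampled values $\hat\Phi^{-1}_{\boldsymbol{y}_i}(p_c)$ converge to the exact values $\Phi^{-1}_{\boldsymbol{y}_i}(p_c)$ at every evaluation percentile.

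Next I would use convexity to promote this pointwise convergence at the nodes into an overapproximation of the true quantile. By \eqref{eq:quantile_over_error} the surrogate already overapproximates the Taylor samples, $\max_q(m_{iq}\eta_i+c_{iq})\geq\hat\Phi^{-1}_{\boldsymbol{y}_i}(1-\eta_i)=\Phi^{-1}_{\boldsymbol{y}_i}(1-\eta_i)-\epsilon$. The Student's $t$ quantile is convex on $[0.5,1)$ and, under Assumption~\ref{assm:convex}, the beta prime quantile is convex on the relevant percentile range; since $\max_q(m_{iq}\eta_i+c_{iq})$ is a maximum of affine functions it is itself convex, and a convex piecewise-affine interpolant of points on a convex curve lies above that curve between nodes. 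As $\epsilon\to 0$ the interpolation nodes descend onto the graph of the true convex quantile, so in the limit the surrogate becomes an overapproximation of $\Phi^{-1}_{\boldsymbol{y}_i}$, yielding $s_i\geq \Phi^{-1}_{\boldsymbol{y}_i}(1-\eta_i)$ (and likewise for $\upsilon_i$). Thus \eqref{eq:Num_quantile_reform} tightens to \eqref{eq:quantile_final}, and Lemma~\ref{lem:2} closes the argument.

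I expect the main obstacle to be precisely this middle step: the surrogate is constructed to overapproximate the \emph{Taylor} quantile $\hat\Phi^{-1}$ within the fixed tolerance $\xi$, not the true quantile $\Phi^{-1}$, so one must certify that the residual $-\epsilon$ offset at the nodes never leaves a persistent \emph{under}approximation of $\Phi^{-1}$. The clean resolution is that the sign of the node offset is $O(\epsilon)$ and vanishes, while convexity keeps the surrogate on the safe side between nodes, so the limit is an overapproximation rather than merely an approximation; this is exactly why the conclusion is only \emph{asymptotic}. A secondary technical point I would address is the right endpoint $p\to 1$, where the heavy-tailed Student's $t$ quantile diverges: for any fixed allocation $\eta_i>0$ the evaluated percentile $1-\eta_i$ stays bounded away from $1$, so the convexity and convergence arguments act on a compact subinterval and the asymptotic overapproximation holds uniformly there.
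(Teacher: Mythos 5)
Your proposal is correct and follows essentially the same route as the paper: invoke the error bound \eqref{eq:taylor_error} to drive the Taylor truncation error to zero, use the overapproximation property \eqref{eq:quantile_over_error} together with convexity of the quantile to conclude that \eqref{eq:Num_quantile_reform} conservatively enforces \eqref{eq:quantile_final} in the limit, and then close via Lemma~\ref{lem:2}. The paper's proof is a three-line sketch of exactly this argument (with the convexity justification deferred to the surrounding discussion of \eqref{eq:quantile_over_error}), so your additional care about the node offsets and the endpoint $p\to 1$ is a welcome elaboration rather than a departure.
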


\begin{proof} 
By \eqref{eq:taylor_error}, the approximation error $\epsilon \rightarrow 0$ as $h \rightarrow 0$ and $n_d \rightarrow \infty$. In this case, \eqref{eq:Num_quantile_reform} conservatively enforces \eqref{eq:quantile_final} by \eqref{eq:quantile_over_error}. By Lemma \ref{lem:2}, \eqref{eq:prob2_final} is conservatively enforced. 
\end{proof}

We note that a limitation of our approach is that we can only guarantee constraint satisfaction in the limit. In practice, a sufficiently differentiable distribution will likely behave well enough that four or more derivatives will result in an approximation with small errors given a small enough step size. We have found empirically that a step size, $h$, on the order of $10^{-6}$, is sufficiently small that the approximation error, \eqref{eq:taylor_error_eq}, is also on the order of $10^{-6}$ when comparing quantile approximation with closed form quantile functions. 

\subsection{Reverse convex constraints}

A standard approach to handling reverse convex constraints is difference of convex functions framework, 
\begin{equation} \label{eq:dc}
\begin{split}
    \underset{x}{\mathrm{minimize}} \quad & \mathcal{F}_0(x)-\mathcal{G}_0(x)  \\
    \mathrm{subject\ to} \quad  & \mathcal{F}_i(x)-\mathcal{G}_i(x) \leq 0 \quad \text{for } i \in \mathbb{N}_{[1,L]}  \\
\end{split}   
\end{equation}
in which the cost and constraints are represented as the difference of two convex functions, i.e., $\mathcal{F}_0, \mathcal{F}_i(\cdot): \R^n \rightarrow \R$ and $\mathcal{G}_0, \mathcal{G}_i(\cdot): \R^n \rightarrow \R$ for $x \in \R^n$ are convex. The convex-concave procedure solves \eqref{eq:dc} to a local minimum \cite{boyd_dc_2016} through an iterative approach, which employs first order approximations of $\mathcal{G}_0, \mathcal{G}_i$ at each iteration. Feasibility of \eqref{eq:dc} is dependent on the feasibility of the initial conditions.

We can show that \eqref{eq:quantile_reform_final_4} elicits a difference of convex formulation by subtracting $f_i(\vec{x}(0),\vec{U})$ from both sides. We also add slack variables to accommodate potentially infeasible initial conditions \cite{boyd_dc_2016,horst2000}. When using a difference of convex program, Lemma \ref{lem:2} guarantees a feasible but locally optimal solution.

\section{Experimental Results}\label{sec:results}

We demonstrate our method on three satellite rendezvous problems. All computations were done on a 1.80GHz i7 processor with 16GB of RAM, using MATLAB, CVX \cite{cvx} and Gurobi \cite{gurobi}. Polytopic construction and plotting was done with MPT3 \cite{MPT3}. All code is available at \url{https://github.com/unm-hscl/shawnpriore-t-dist-cwh}.

For all three scenarios, solution convergence between iterations was defined as the difference of sequential performance objectives and the sum of slack variables both less than $10^{-8}$. Difference of convex programs were limited to 100 iterations. The first order approximations of the reverse convex constraints were initially computed assuming no system input. We use the first four derivatives to compute the numerical approximation of the quantile functions. 

\subsection{Observational Mission} \label{ex:observe}

Consider a scenario in which two satellites are stationed in geosynchronous orbit. One satellite, the chief, has malfunctioning thrusters believed to be caused by a leaky seal in the thruster cap. The second satellite, the deputy, has been tasked with observing the chief to verify the cause of the malfunction. The deputy must navigate to desired locations around the chief and maintain an attitude orientation such that the chief is within the sensors $40^\circ$ field of view. Further, the deputy must avoid colliding with the chief at all time steps. We presume all motion is done in-plane. The relative translational dynamics of the deputy, with respect to the chief, are described by the planar Clohessy-Wilthire-Hill (CWH) equations \cite{wiesel1989_spaceflight}. We further presume the the angular momentum vector of the observing satellite is perpendicular to the plane and all torques applied are parallel to the plane. For Euler angle representations, this corresponds to rotations in the yaw parameter (denoted as $\theta$). To accommodate an additive $t$ disturbance to this parameter, we must assume $\theta$ can take any real value. The small angle approximation of the attitude kinematics allow us to model the attitude acceleration being equal to the applied torque of the satellites attitude control mechanism. Hence, the system dynamics are 
\begin{subequations} \label{eq:dynamics_observe}
\begin{align}
\ddot x - 3 \omega^2 x - 2 \omega \dot y = & \;  \frac{F_x}{m_c} \\
\ddot y + 2 \omega \dot x = & \;  \frac{F_y}{m_c} \\
\ddot \theta = & \; \frac{F_{\theta}}{J_{\theta}}
\end{align}   
\label{eq:cwh1}
\end{subequations}
with input $\vec{u} = [ \begin{array}{ccc} F_x & F_y & F_{\theta} \end{array}]^\top$ and $\omega = \sqrt{\frac{\mu}{R^3_0}}$. 

We discretize \eqref{eq:cwh1} under the assumption of impulse control, with sampling time $300$s, and insert a disturbance process that captures uncertainties in the model specification with respect to the chiefs malfunctioning thrusters, so that dynamics the deputy are described by  
\begin{equation}
    \bvec{x}(k+1) = \overline{A} \bvec{x}(k) + \overline{B} \vec{U}(k) + \bvec{w}(k) 
\end{equation}
with $\mathscr{U} = [-3m,3m]^2  \times [-90^\circ,90^\circ]$ per time step, and time horizon $N=8$, corresponding to 40 minutes of operation. We assume 
\begin{equation}
    \bvec{w}(k) \sim t(0, \overline{\Sigma}, 4)
\end{equation}
where $\overline{\Sigma} = \mathrm{diag}\left(10^{-4} \overline{I}_2, 10^{-6}, 5 \!\times\! 10^{-8} \overline{I}_2, 5 \!\times\! 10^{-10} \right)$ and that the dependence structure of the disturbances aligns with Assumption \ref{assm:distdepend}.  Here, the use of the multivariate $t$ is used as malfunctioning thrusters are likely to have a high propensity for outliers in comparison to a Gaussian distribution. We choose $\nu=4$ as for this value $\cov(\bvec{w}_i(k)) = 2\overline{\Sigma}$ implying high variance.

The terminal sets $\mathscr{T}_{k}$ are $4\times 4$m boxes centered approximately 10m away from the origin starting on the positive $x$ axis and progressing clockwise with two time steps in between each target set. Each target set requires the relative translational velocity be bounded in both directions by $[-0.01, 0.01]$m/s.  The attitude is bounded by $[-20^\circ,20^\circ]$ from $135^\circ$, $180^\circ$, $225^\circ$, and $270^\circ$, for time steps 2, 4, 6, and 8, respectively. This is to ensure the deputy has the chief within its field of view from any location in the translational target sets.  The rotational velocity is bounded by $[-0.01, 0.01]$deg/s. For collision avoidance, we presume that the deputy must remain at least $r=8$m away from the chief, hence $\overline{S} = \begin{bmatrix} \overline{I}_{2} & \overline{0}_{2 \times 4} \end{bmatrix}$ to extract the positions.  Violation thresholds for terminal sets and collision avoidance are $\alpha_{\mathscr{T}} = \alpha_o = 0.2$, respectively.
\begin{align}
    \Prob \left\{ \bigcap_{k=1}^4 \bvec{x}(2k) \in \mathscr{T}_{2k} \right\} &\geq 1-\alpha_{\mathscr{T}} \label{eq:terminal_observe}\\
    \Prob \left\{ \bigcap_{k=1}^{8} \left\| \overline{S} \!\cdot\! \bvec{x}(k) \right\| \geq r \right\} &\geq 1-\alpha_o  \label{eq:avoidance_observe}
\end{align}

The performance objective is based on fuel consumption. 
\begin{equation}
    J(\vec{U}) = \vec{U}^\top \vec{U}
\end{equation}

\begin{figure}
    \centering
    \includegraphics[width=0.85\columnwidth]{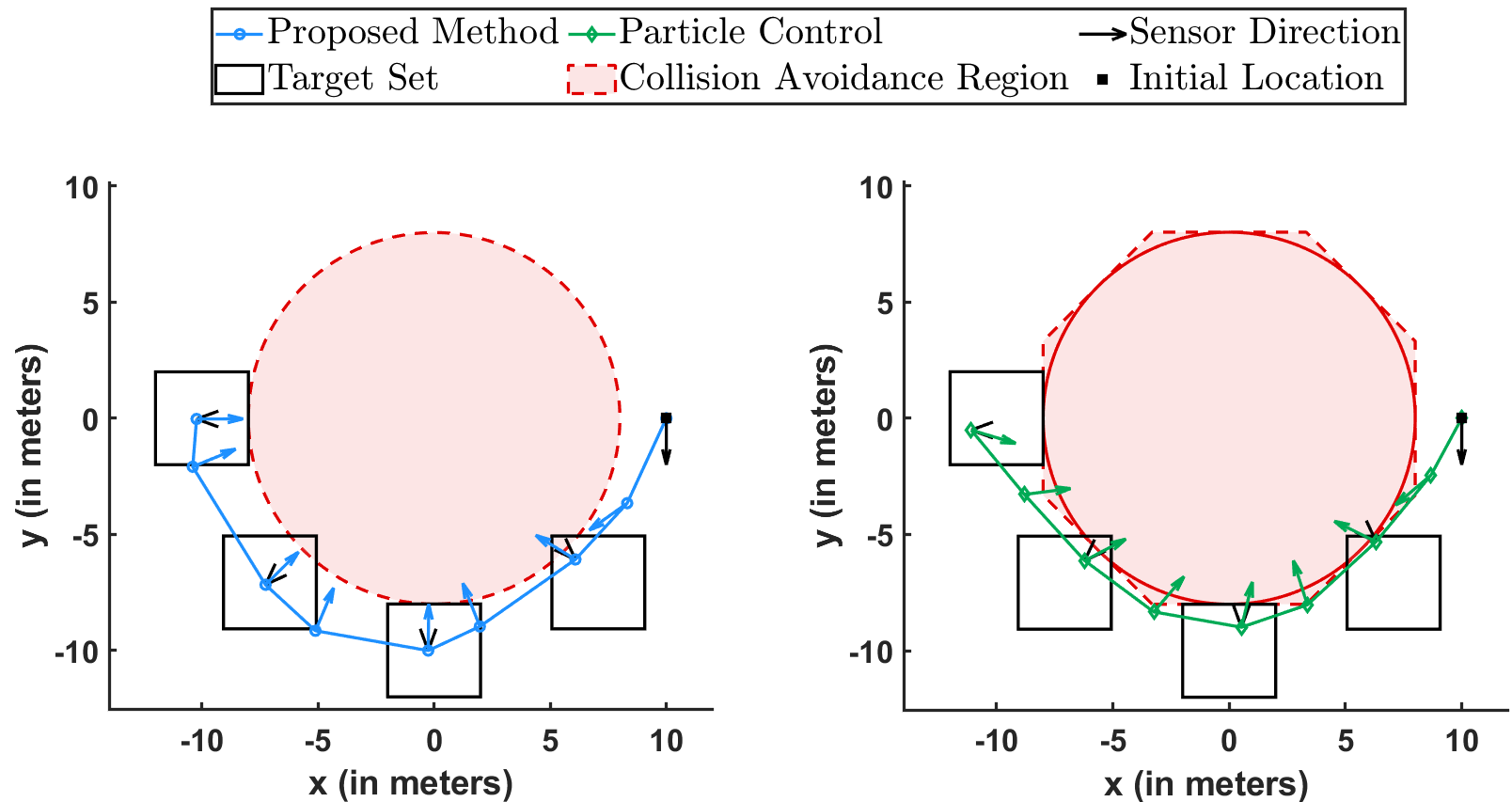}
    \caption{Comparison of trajectories between proposed method (top) and particle control (bottom) in CWH frame. Here, the black 'v' extending from the satellite's position is the attitude target set, and the arrows present the attitude of the spacecraft. The octagonal region is a linear over-approximation of the 2-norm used in our method. The black arrow and square of the right hand side of each graph represent the initial translational and attitude conditions of the satellite.}
    \label{fig:cwh_traj_compare}
\end{figure}

\begin{table}
    \caption{Comparison of Solution and Computation Time for dynamics \eqref{eq:dynamics_observe} with Multivariate-$t$ Disturbance.}
    \centering
    \begin{tabular}{lcc}
         \hline \hline
         Metric &  Proposed Method &  Particle Control \\
         \hline 
         Computation Time & 4.0086 sec & 301.6698 sec\\ 
         Iterations to Converge &  7 & N/a \\ 
         Objective Cost for Derived Solution & $8.55 \times 10^{-4}$ & $2.11 \times 10^{-4}$ \\ \hline
    \end{tabular}
    \label{tab:cwh_stats_observe}
\end{table}

When approximating the numerical quantiles, we presume intervals $h=5\times 10^{-6}$, and maximum approximation error $\xi = 0.01$. For the Student's $t$ distributions, we set the instantiating point, $p_0$, to $0.5$ with known quantile $\Phi^{-1}(p_0) = 0$. For the beta prime distributions, we set the instantiating point to $0.5$. Since, $\gamma = 1$, computation of $\Phi^{-1}(p_0)$ was completed using the square root of the analytical median expression \eqref{eq:beta_prime_med_a}. Each quantile approximation used the first four derivatives of the pdf. We note that Assumption \ref{assm:convex} is met as $\gamma = 1$ implies the quantile is convex over $p=[0,1]$.

\begin{table}
    \caption{Constraint Satisfaction for CWH dynamics with Multivariate-$t$ Disturbance, with $10^4$ Samples and Probabilistic Violation Threshold of $1-\alpha =0.8$. Satisfaction 'SAT' of the constraint is marked with a $\checkmark$.}
    \centering
    \begin{tabular}{lcccc}
         \hline 
         \hline
         Constraint &  Proposed Method &  SAT & Particle Control & SAT \\ \hline 
         Terminal Set \eqref{eq:terminal_observe} & 0.8721 & $\checkmark$ & 0.5596 & \\ 
         Collision Avoidance with Chief \eqref{eq:avoidance_observe} & 0.9726 & $\checkmark$ &  0.8243 & $\checkmark$ \\
         \hline
    \end{tabular}
    \label{tab:cwh_constraint_observe}
\end{table}

We compare the proposed method with the mixed integer particle approach using a the mixed integer linear program (MILP) \cite{ono2008iterative}. To facilitate comparison, we only need to modify the collision avoidance constraint. As the circular region defined by the 2-norm cannot be encoded into a MILP, we use an 8-sided polytope to overapproximate the 2-norm collision avoidance region. We generated 25 disturbance sequences to generate an open loop controller. The resulting trajectories, costs, and computation times differ drastically as shown in Figure \ref{fig:cwh_traj_compare} and Table \ref{tab:cwh_stats_observe}. To assess constraint satisfaction, we generated $10^4$ Monte Carlo sample disturbances for each approach. Table \ref{tab:cwh_constraint_observe} shows that while both methods satisfied the collision avoidance constraint, particle control did not satisfy the safety threshold for the target set constraint. 

The proposed method performed two to three orders of magnitude faster than particle control. Given the significant increase in binary variables needed to perform particle control, this comes as no surprise. Conversely, the low number of disturbance samples is likely the cause for the poor performance with respect to the target set constraint. Given the random nature of the sampling process, 25 samples is not enough to characterize the behaviour on a larger scale, particularly for a heavy tailed distribution. We attempted to use 50 disturbance samples but could not find a solution in under 3 hours.

The over approximation of the 2-norm collision avoidance is likely the reason the particle control solution trajectory satisfied the collision avoidance constraint. In the corners of the octagonal region the collision avoidance constraint is nearly a meter larger than needed. This forces the solution further away from collision avoidance region then it would have given a more accurate approximation of the region. We attempted to perform the particle control approach with polytopic collision avoidance regions defined with more half-space constraints to closer approximate the 2-norm region. In an attempt to use 16 sides, the method could not find a solution in under 24 hours. 

The objective cost for our method is nearly 4 times larger than that of particle control. We know that the true optimal trajectory has a cost lying somewhere between the two as one is to conservative and the other doesn't meet the safety thresholds. Given the large differences between the specified probabilistic safety threshold and the sampled constraint satisfaction, we can say our method has introduced significant conservatism. This comes as no surprise given the use of Boole's inequality for both constraints, and the use of the reverse triangle and parallelogram laws for the reformulation of the collision avoidance constraint. 

There is, however, one benefit to the introduced conservatism. In this demonstration, we were able to use analytical results to establish instantiation points for our quantile approximation. This, in conjunction with the conservatism, should near guarantee constraint satisfaction for Problem \ref{prob:2} as a result of Lemma \ref{lem:3}. While we cannot prove this to be the case, empirical results are likely to reflect this conclusion under similar conditions.  

\subsection{High Capacity Docking} \label{ex:dock}

Consider a scenario in which seven satellites are stationed in geosynchronous orbit. Each satellite is tasked with reaching a terminal target set representing a docking location with a static refueling station. Each satellite must avoid other satellites and the refueling station while navigating to their respective target sets. The relative dynamics of each spacecraft, with respect to the known location of the refueling station, are described by the CWH equations
\cite{wiesel1989_spaceflight}
\begin{subequations}
\begin{align}
\ddot x - 3 \omega^2 x - 2 \omega \dot y &= \frac{F_x}{m_c} \label{eq:cwh:a}\\
\ddot y + 2 \omega \dot x & = \frac{F_y}{m_c} \label{eq:cwh:b}\\
\ddot z + \omega^2 z & = \frac{F_z}{m_c}. \label{eq:cwh:c}
\end{align}   
\label{eq:cwh}
\end{subequations}
with input $\vec{u}_i = [ \begin{array}{ccc} F_x & F_y & F_z\end{array}]^\top$, and $\omega = \sqrt{\frac{\mu}{R^3_0}}$. 

We discretize \eqref{eq:cwh} under the assumption of impulse control, with sampling time $300$s, and insert a disturbance process that captures model uncertainties and uncaptured perturbations, so that dynamics for vehicle $i$ are described by  
\begin{equation}
    \bvec{x}_i(k+1) = \overline{A} \bvec{x}_i(k) + \overline{B} \vec{U}_i(k) + \bvec{w}_i(k) \label{eq:cwh_lin}
\end{equation}
with $\mathscr{U}_i = [-3,3]^3$, and time horizon $N=8$, corresponding to 40 minutes of operation. We assume 
\begin{equation}
    \bvec{w}_i(k) \sim t(0, \overline{\Sigma}, 20)
\end{equation}
where $\overline{\Sigma} = \mathrm{diag}\left(10^{-4} \overline{I}_3, 5 \!\times\! 10^{-8} \overline{I}_3 \right)$ and that the dependence structure of the disturbances aligns with Assumption \ref{assm:distdepend}. Here, the use of the multivariate $t$ is used to model perturbation forces of interest but not captured in the CWH dynamics. This includes drag, solar radiation pressure, 3\textsuperscript{rd} body acceleration from the Sun and Moon, and impacts with small but unknown debris. We choose $\nu=20$ as the combined effect of these perturbation forces may be small but are likely outliers in comparison to a Gaussian distribution.

The terminal sets $\mathscr{T}_{iN}$ are $5\times 5 \times 5$m boxes centered around desired terminal locations in $x,y,z$ coordinates approximately 9m away from the origin, with velocity bounded in all three directions by $[-0.01, 0.01]$m/s.  For collision avoidance, we presume that all satellites must remain at least $r=8$m away from each other and the refueling station, hence $\overline{S} = \begin{bmatrix} \overline{I}_{3} & \overline{0}_{3\times 3} \end{bmatrix}$ to extract the positions. We presume the collision avoidance constraints are valid only for the non-terminal time steps.  Violation thresholds for terminal sets and collision avoidance are $\alpha_{\mathscr{T}} = \alpha_r = \alpha_o = 0.2$, respectively.
\begin{align}
    \Prob \left\{ \bigcap_{i=1}^7 \bvec{x}_i(N) \in \mathscr{T}_{iN} \right\} &\geq 1-\alpha_{\mathscr{T}} \label{eq:terminal}\\
    \Prob \left\{ \bigcap_{k=1}^{7} \bigcap_{i=1}^{7} \left\| \overline{S} \!\cdot\! \bvec{x}_i(k) \right\| \geq r \right\} &\geq 1-\alpha_o  \label{eq:avoidance_origin}\\
    \Prob \left\{ \bigcap_{k=1}^{7} \bigcap_{i,j=1}^{7} \left\| \overline{S} \!\cdot\!\left(\bvec{x}_i(k)\! - \! \bvec{x}_j(k)\right)\right\| \geq r \right\} &\geq 1-\alpha_r  \label{eq:avoidance}
\end{align}
We note that \eqref{eq:avoidance_origin}-\eqref{eq:avoidance} has a combined 196 collision avoidance constraints to be embedded in the problem.

The performance objective is based on fuel consumption. 
\begin{equation}
    J(\vec{U}_1, \dots, \vec{U}_7) = \sum^7_{i=1} \vec{U}_i^\top \vec{U}_i
\end{equation}

When approximating the numerical quantiles, we presume intervals $h=5\times 10^{-6}$, and maximum approximation error $\xi = 0.01$. For the Student's $t$ distributions, we set the instantiating point, $p_0$, to $0.5$ with known quantile $\Phi^{-1}(p_0) = 0$. For the beta prime distributions, we set the instantiating point to $0.5$. Computation of $\Phi^{-1}(p_0)$ was completed using the median approximation \cite{Lyon2021}
\begin{equation} \label{eq:lyon_median}
    \Phi^{-1}(0.5) = \sqrt{\frac{2^{-\frac{1}{\gamma}}(\log(2)-\frac{1}{2}+\gamma)}{2^{-\frac{1}{\delta}}(\log(2)-\frac{1}{2}+\delta)} }   
\end{equation}
Each quantile approximation used the first four derivatives of the pdf. This median approximation was compared against the mode of the distribution to verify that $\alpha_r$ and $\alpha_o$ were in the convex region of the quantile. 

\begin{figure}
    \centering
    \includegraphics[width=0.85\columnwidth]{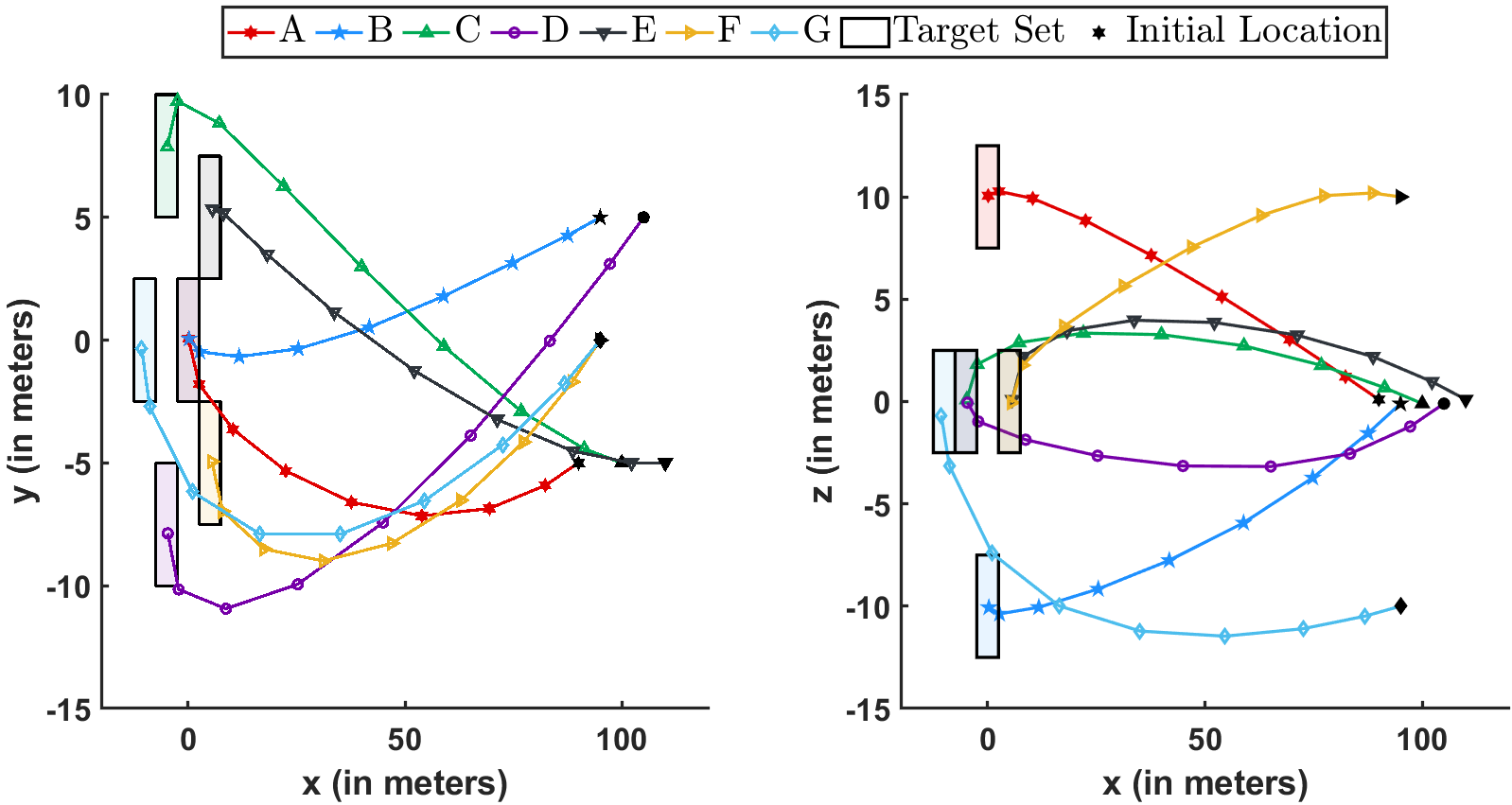}
    \caption{Trajectories of the seven satellites in CWH frame. }
    \label{fig:cwh_traj}
\end{figure}

\begin{table}
    \caption{Constraint Satisfaction for CWH dynamics with Multivariate-$t$ Disturbance, with $10^4$ Samples and Probabilistic Violation Threshold of $1-\alpha =0.8$. Satisfaction 'SAT' of the constraint is marked with a $\checkmark$.}
    \centering
    \begin{tabular}{lcc}
         \hline\hline
         Constraint &  Sample Satisfaction & SAT\\
         \hline 
         Terminal Set \eqref{eq:terminal} & 0.8207 & $\checkmark$\\ 
         Collision Avoidance with Refueling Station \eqref{eq:avoidance_origin} & 0.9872& $\checkmark$ \\ 
         Inter-satellite Collision Avoidance \eqref{eq:avoidance} &  0.9869 & $\checkmark$\\
         \hline
    \end{tabular}
    \label{tab:cwh_constraint}
\end{table}

\begin{table}
    \caption{Computation Statistics for CWH dynamics with Multivariate-$t$ Disturbance.}
    \centering
    \begin{tabular}{lc}
         \hline\hline
         Metric &  Value \\
         \hline 
         Computation Time to Solve Problem 2& 33.3759  sec \\
         Total Computation Time & 43.3393 sec \\ 
         Iterations to Converge &  34 \\ 
         Objective Cost for Derived Solution & 0.015873 \\ \hline
    \end{tabular}
    \label{tab:cwh_stats}
\end{table}

The resulting trajectories are shown in Figure \ref{fig:cwh_traj}. To assess constraint satisfaction, we generated $10^4$ Monte-Carlo sample disturbances for each approach. Table \ref{tab:cwh_constraint} shows that all constraints were satisfied to the required threshold. Note that all three constraints are satisfied to a more conservative threshold implying that the reformulation \eqref{eq:quantile_final} is not a tight upper bound for the problem. These results are consistent with the finding in Section \ref{ex:observe}. We note that solutions using particle control could not be found within a weeks time for 25 disturbance samples; thus, we do not provide a comparison for this example.

Table \ref{tab:cwh_stats} provides computational statistics on the difference of convex program. The proposed method computed the trajectories in under a minute. With nearly 200 collision avoidance constraints embedded in this problem, solution convergence of this speed warrants further consideration for this method. 

\subsection{Monte Carlo Simulation}

Consider a scenario in which three satellites are stationed in geosynchronous orbit. The satellites have been caught in an unexpected small debris field caused by a nearby collision. The satellites must cooperatively reach a new configuration outside of the debris field. Each satellite must reach a desired target set while avoiding collision with the other satellites. We again use the CWH equations \eqref{eq:cwh}. We discretize \eqref{eq:cwh} under the assumption of impulse control, with sampling time $300$s, and insert a disturbance process, so that dynamics for vehicle $i$ are described by  
\begin{equation}
    \bvec{x}_i(k+1) = \overline{A} \bvec{x}_i(k) + \overline{B} \vec{U}_i(k) + \bvec{w}_i(k)
\end{equation}
with $\mathscr{U}_i = [-3,3]^3$, and time horizon $N=8$, corresponding to 40 minutes of operation. We assume 
\begin{equation}
    \bvec{w}_i(k) \sim t(0, \overline{\Sigma}, 4)
\end{equation}
where $\overline{\Sigma} = \mathrm{diag}\left(10^{-4} \overline{I}_3, 5 \!\times\! 10^{-8} \overline{I}_3 \right)$ and that the dependence structure of the disturbances aligns with Assumption \ref{assm:distdepend}. Here, the use of the multivariate $t$ is used to model perturbation forces of the small debris colliding with the satellite. 

The terminal sets $\mathscr{T}_{iN}$ are $6\times 6 \times 6$m boxes centered around desired terminal locations in $x,y,z$ coordinates approximately 9m away from the origin, with velocity bounded in all three directions by $[-0.1, 0.1]$m/s.  For collision avoidance, we presume that all satellites must remain at least $r=8$m away from each other, hence $\overline{S} = \begin{bmatrix} \overline{I}_{3} & \overline{0}_{3\times 3} \end{bmatrix}$ to extract the positions. We presume the collision avoidance constraints are valid for all time steps. Violation thresholds for terminal sets and collision avoidance are $\alpha_{\mathscr{T}} = \alpha_r = 0.2$, respectively.
\begin{align}
    \Prob \left\{ \bigcap_{i=1}^3 \bvec{x}_i(N) \in \mathscr{T}_{iN} \right\} &\geq 1-\alpha_{\mathscr{T}} \label{eq:terminal_mc}\\
    \Prob \left\{ \bigcap_{k=1}^{8} \bigcap_{i,j=1}^{3} \left\| \overline{S} \!\cdot\!\left(\bvec{x}_i(k)\! - \! \bvec{x}_j(k)\right)\right\| \geq r \right\} &\geq 1-\alpha_r  \label{eq:avoidance_mc}
\end{align}

The performance objective is based on fuel consumption. 
\begin{equation}
    J(\vec{U}_1, \dots, \vec{U}_3) = \sum^3_{i=1} \vec{U}_i^\top \vec{U}_i
\end{equation}

When approximating the numerical quantiles, we presume the same methodology prescribed in Section \ref{ex:dock}. 

We demonstrate our method with 1,000 Monte Carlo disturbances added to our initial condition's position elements. Here, the disturbances are sampled independently from a multivariate $t$ distribution with zero mean, identity correlation matrix, and 10 degrees of freedom. Each set of random initial conditions were used to solve Problem \ref{prob:2}. We note that all 1,000 sets of initial conditions generated a solution in less than the maximum allowable iterations of 100. A descriptive plot of the resulting trajectories are shown in Figure \ref{fig:mc_traj}. Here, the error bars represent the minimum and maximum of derived solutions path at each time step. The error bars are centered on the average of the set of derived solutions. We see that even though the initial conditions form a high variance distribution, the optimized trajectories converge over time. By the terminal step, we see the distribution of the trajectories being approximately over 1m$\times$1m$\times$1m sets.  

\begin{figure}
    \centering
    \includegraphics[width=0.85\columnwidth]{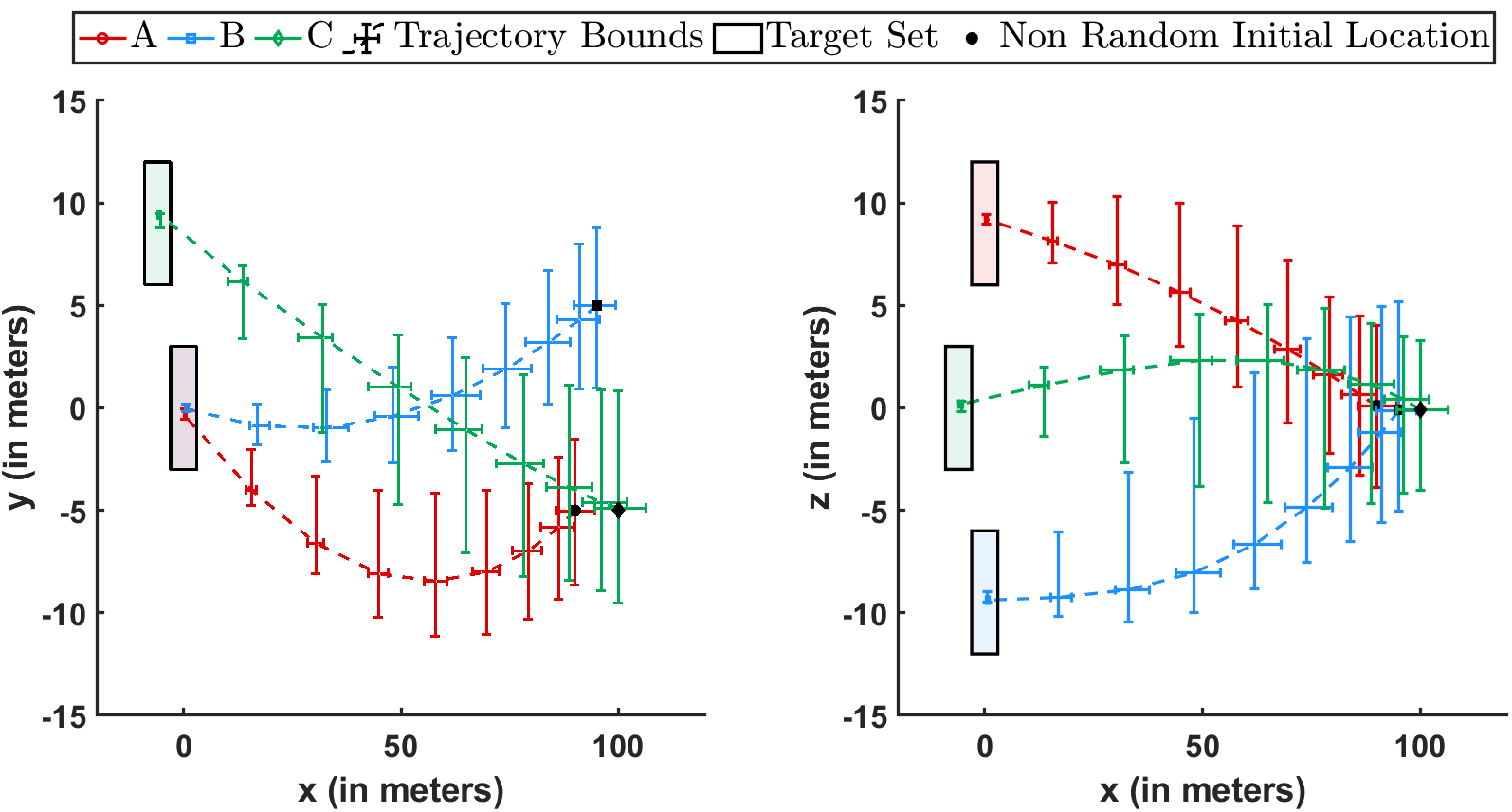}
    \caption{Descriptive plot of the trajectories of the three satellites in CWH frame with 1,000 Monte Carlo generated disturbances added to the non-random initial conditions. Here, the error bars represent the minimum and maximum values of the derived trajectories at each time step and are centered on the average path.}
    \label{fig:mc_traj}
\end{figure}

\begin{table} 
    \caption{Descriptive Statistics for CWH dynamics with Multivariate-$t$ Disturbance and 1,000 Random Initial Conditions.}
    \centering
    \begin{tabular}{lcccc}
    \hline \hline
         Metric &  Mean & Std Dev & Min & Max\\ \hline
         Computation time (sec) & 4.8043 & 0.9437 & 2.8522 & 18.0091\\
         Optimal Cost ($N^2$) & $1.7426\times 10^{-3}$ & $2.9013\times 10^{-5}$ & $1.6555\times 10^{-3}$ & $1.8897\times 10^{-3}$ \\
         Iterations to Converge & 12.2270 & 2.3330 & 8 & 45 \\ \hline
    \end{tabular}
    \label{tab:mc_decrpt}
\end{table}

\begin{table}
    \caption{Constraint Satisfaction Descriptive Statistics for CWH dynamics with Multivariate-$t$ Disturbance and 1,000 Random Initial Conditions. Constraint Satisfaction Computed with $10^4$ Samples and Probabilistic Violation Threshold of $1-\alpha =0.8$.}
    \centering
    \begin{tabular}{lcccc}
    \hline \hline
         Constraint &  Mean & Std Dev & Min & Max\\ \hline
         Terminal Set \eqref{eq:terminal_mc} & 0.8418 & $3.6536\times 10^{-3}$ & 0.8294 & 0.8518 \\
         Inter-satellite Collision Avoidance \eqref{eq:avoidance_mc} & 0.9941 & $8.2951 \times 10^{-4}$ & 0.9911 & 0.9967 \\ \hline
    \end{tabular}
    \label{tab:mc_constraint}
\end{table}

Tables \ref{tab:mc_decrpt} and \ref{tab:mc_constraint} provide descriptive statistics for the 1,000 solutions. We see that the initial conditions did not have much effect on the computation time and optimal costs, as evidenced by the small standard deviations. Of note, the maximum values listed in Table \ref{tab:mc_decrpt} are all from the same run. In this instance, the three vehicles had initial conditions within 2m of each other requiring a larger initial change in velocity to meet the collision avoidance requirement. This led to an increase in iterations as a result of the added slack variables only being lightly penalized in the early iterations. 

In Table \ref{tab:mc_constraint} we see that all trajectories satisfied the probabilistic constraints to the required threshold. Of interest is the spread of the two constraints over the 1,000 trajectories. We see that the standard deviation for the terminal constraint \eqref{eq:terminal_mc} is larger than that of the collision avoidance constraint \eqref{eq:avoidance_mc} by an order of magnitude. This is likely caused by the difference in the levels of conservatism introduced by the quantile reformulations. Since the reverse convex collision avoidance constraint's reformulation adds significant conservatism, the range of values the true probability can take is more restricted. Thus, creating this apparent difference. 

From Figure \ref{fig:mc_traj} and Tables \ref{tab:mc_decrpt} and \ref{tab:mc_constraint}, it is apparent that the method is not sensitive to the initial conditions. This comes as no surprise as the method is designed to optimize based on constraints from the first time step forward. This implies that so long as the control authority is sufficient that the satellites can reach the required distance with the first input sequence, a solution is likely to exist and can be found. 

\section{Conclusion}\label{sec:conclusion}

We proposed a framework for synthesising stochastic optimal controllers for LTI systems under heavy tailed disturbances modeled with the multivariate $t$ distribution. Our approach relies on a  affine numerical approximations of unknown quantile functions via a Taylor series expansion. We embed the affine quantile approximation in a quadratic difference-of-convex programs that solves a conservative reformulation of the original problem. We demonstrated our approach on three satellite rendezvous scenarios with varying system parameterizations and safety requirements. Our results show that the proposed method is not only computationally efficient but also adaptable to many scenarios. 

\section*{Appendix} \label{appx:prop}
\begin{prop}[Marginal distributions of a sub-vectors of a multivariate $t$ random vector \cite{Sutradhar1986}] \label{prop:t_margin}\hfill \\
Suppose $\bvec{x} \in \R^n \sim t \left(\vec{\mu}, \overline{\Sigma}, \nu \right)$ where
\begin{subequations}
\begin{align}
    \bvec{x} = & \begin{bmatrix} \bvec{x}_1 & \bvec{x}_2 \end{bmatrix}^{\top} \\
    \vec{\mu} = & \begin{bmatrix} \vec{\mu}_1 & \vec{\mu}_2 \end{bmatrix}^{\top} \\
    \overline{\Sigma} = & \begin{bmatrix} \overline{\Sigma}_{11} & \overline{\Sigma}_{12} \\ \overline{\Sigma}_{21} & \overline{\Sigma}_{22} \end{bmatrix}
\end{align}
\end{subequations}
The partitioned vectors $\bvec{x}_1 \in \R^m$ and $\bvec{x}_2 \in \R^{n-m}$ with partitioned location and scale parameters, $\vec{\mu}$ and $\overline{\Sigma}$, respectively, have the marginal distributions,
\begin{subequations}
\begin{align}
    \bvec{x}_1 \sim & t \left(\vec{\mu}_1, \overline{\Sigma}_{11}, \nu \right)\\
    \bvec{x}_2 \sim & t \left(\vec{\mu}_2, \overline{\Sigma}_{22}, \nu \right)
\end{align}
\end{subequations}
\end{prop}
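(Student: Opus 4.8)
The plan is to avoid integrating the $t$ pdf directly and instead exploit the stochastic representation $\bvec{x} \equiv \overline{\Sigma}^{1/2}\bvec{y}/\sqrt{\boldsymbol{z}/\nu} + \vec{\mu}$ with $\bvec{y}\sim Normal(\vec{0},\overline{I}_n)$ independent of $\boldsymbol{z}\sim CSquare(\nu)$, given in Section~\ref{ssec:t_dist}. The pitfall in a naive argument is that the matrix square root $\overline{\Sigma}^{1/2}$ does not respect the block partition, so one cannot simply read off the top $m$ rows as $\overline{\Sigma}_{11}^{1/2}$. The remedy is to push the marginalization down to the Gaussian layer, where block marginals are elementary.

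First I would set $\bvec{g} \equiv \overline{\Sigma}^{1/2}\bvec{y}$, so that $\bvec{g}\sim Normal(\vec{0},\overline{\Sigma})$ because $\cov(\bvec{g}) = \overline{\Sigma}^{1/2}\overline{I}_n(\overline{\Sigma}^{1/2})^{\top}=\overline{\Sigma}$, and write $\bvec{x}-\vec{\mu}=\bvec{g}/\sqrt{\boldsymbol{z}/\nu}$. Partitioning $\bvec{g}=[\bvec{g}_1^{\top}\ \bvec{g}_2^{\top}]^{\top}$ conformably with $\bvec{x}$, the standard marginalization property of the multivariate Gaussian gives $\bvec{g}_1\sim Normal(\vec{0},\overline{\Sigma}_{11})$. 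Hence there exists a standard normal vector $\bvec{y}_1\sim Normal(\vec{0},\overline{I}_m)$ with $\bvec{g}_1 \stackrel{d}{=} \overline{\Sigma}_{11}^{1/2}\bvec{y}_1$.

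The key step is the degrees-of-freedom bookkeeping. Because the single scalar $\boldsymbol{z}$ is shared across all coordinates, selecting the first block does not resample it, and since $\bvec{y}_1$ is a measurable function of $\bvec{y}$ alone it remains independent of $\boldsymbol{z}$. Therefore $\bvec{x}_1 - \vec{\mu}_1 = \bvec{g}_1/\sqrt{\boldsymbol{z}/\nu} \stackrel{d}{=} \overline{\Sigma}_{11}^{1/2}\bvec{y}_1/\sqrt{\boldsymbol{z}/\nu}$, which is precisely the defining construction of a multivariate $t$ vector with location $\vec{\mu}_1$, scale $\overline{\Sigma}_{11}$, and the same degrees of freedom $\nu$. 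This yields $\bvec{x}_1\sim t(\vec{\mu}_1,\overline{\Sigma}_{11},\nu)$, and an identical argument applied to the bottom block gives $\bvec{x}_2\sim t(\vec{\mu}_2,\overline{\Sigma}_{22},\nu)$.

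The main obstacle I would expect is this preservation of $\nu$, which is what makes the statement nontrivial: independently resampling the chi-square scale per block would destroy the $t$ form, so the argument hinges on recognizing that the \emph{shared} $\boldsymbol{z}$, together with the independence of $\bvec{y}_1$ from $\boldsymbol{z}$, is exactly what keeps the marginal in the same family with an unchanged degrees-of-freedom parameter. A direct attack by integrating the joint density over $\bvec{x}_2$ would instead require evaluating a multidimensional integral of the quadratic form raised to the power $-(\nu+n)/2$, which is considerably messier and obscures why $\nu$ carries through intact.
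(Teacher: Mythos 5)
Your proof is correct, but note that the paper does not actually prove Property \ref{prop:t_margin}: it is stated in the Appendix as a known result with only a citation, so there is no in-paper argument to compare against. Your route --- marginalizing at the Gaussian layer of the stochastic representation $\bvec{x} = \overline{\Sigma}^{1/2}\bvec{y}/\sqrt{\boldsymbol{z}/\nu} + \vec{\mu}$ and observing that the single shared $\boldsymbol{z}$ is carried along unchanged --- is the standard and cleanest way to establish the property, and you correctly identify both the pitfall (the matrix square root does not respect the block partition) and the essential point (the block $\bvec{g}_1$ remains independent of $\boldsymbol{z}$, so the marginal keeps the same degrees of freedom $\nu$). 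One small tightening: rather than asserting the existence of some $\bvec{y}_1$ with $\bvec{g}_1 \stackrel{d}{=} \overline{\Sigma}_{11}^{1/2}\bvec{y}_1$ and then dividing by $\sqrt{\boldsymbol{z}/\nu}$ --- equality in distribution of the numerator alone does not transfer through a common random denominator --- define $\bvec{y}_1 = \overline{\Sigma}_{11}^{-1/2}\bvec{g}_1$ explicitly. Then $\bvec{g}_1 = \overline{\Sigma}_{11}^{1/2}\bvec{y}_1$ holds almost surely, $\bvec{y}_1 \sim Normal(\vec{0},\overline{I}_m)$, and $\bvec{y}_1$ is a measurable function of $\bvec{y}$ alone, hence independent of $\boldsymbol{z}$, which is exactly the joint-distribution fact your final step requires. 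With that made explicit the argument is complete, and it is indeed preferable to the density-integration alternative you mention, which reproduces the same conclusion only after a much messier computation.
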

\begin{prop}[Affine transformations of a multivariate $t$ random vector \cite{Sutradhar1986}] \label{prop:t_affine} \hfill \\
Suppose $\bvec{x} \in \R^n \sim t \left(\vec{\mu}, \overline{\Sigma}, \nu \right)$. For $\overline{B} \in \R^{m \times n}$ with full row rank and  $\vec{b} \in \R^m$ with $m \leq n$ then
\begin{equation}
    \overline{B} \bvec{x} + \vec{b} \sim t(\overline{B} \vec{\mu} + \vec{b}, \overline{B} \overline{\Sigma} \overline{B}^{\top}, \nu)
\end{equation}
\end{prop}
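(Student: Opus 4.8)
The plan is to prove Property~\ref{prop:t_affine} directly from the stochastic construction of the multivariate $t$ distribution given in Section~\ref{ssec:t_dist}, rather than through a density-level change of variables. Writing $\bvec{x} = \vec{\mu} + \overline{\Sigma}^{\frac{1}{2}} \bvec{y} / \sqrt{\boldsymbol{z}/\nu}$ with $\bvec{y} \sim Normal(\vec{0}, \overline{I}_n)$ independent of $\boldsymbol{z} \sim CSquare(\nu)$, I would apply the affine map and push the deterministic pieces through the representation. Because the denominator $\sqrt{\boldsymbol{z}/\nu}$ is a scalar that does not interact with $\overline{B}$ or $\vec{b}$, the goal reduces to identifying the law of the Gaussian numerator after transformation and then recognizing the result as a $t$ distribution with the claimed parameters.

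The key steps, in order, are as follows. First, compute
\begin{equation}
    \overline{B}\bvec{x} + \vec{b} = \left(\overline{B}\vec{\mu} + \vec{b}\right) + \frac{\overline{B}\,\overline{\Sigma}^{\frac{1}{2}}\bvec{y}}{\sqrt{\boldsymbol{z}/\nu}} .
\end{equation}
Second, note that $\overline{B}\,\overline{\Sigma}^{\frac{1}{2}}\bvec{y}$ is a linear image of a standard Gaussian and hence $\overline{B}\,\overline{\Sigma}^{\frac{1}{2}}\bvec{y} \sim Normal\!\left(\vec{0},\, \overline{B}\,\overline{\Sigma}\,\overline{B}^{\top}\right)$, using $\overline{\Sigma}^{\frac{1}{2}}(\overline{\Sigma}^{\frac{1}{2}})^{\top}=\overline{\Sigma}$. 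Third, invoke the full-row-rank hypothesis on $\overline{B}$ together with $\overline{\Sigma}\in\R_{++}^{n\times n}$: for any nonzero $\vec{v}\in\R^m$ we have $\vec{v}^{\top}\overline{B}\,\overline{\Sigma}\,\overline{B}^{\top}\vec{v} = (\overline{B}^{\top}\vec{v})^{\top}\overline{\Sigma}(\overline{B}^{\top}\vec{v}) > 0$ since $\overline{B}^{\top}$ is injective, so $\overline{B}\,\overline{\Sigma}\,\overline{B}^{\top}$ is an $m\times m$ positive definite matrix with a well-defined square root. Fourth, rewrite the numerator as $(\overline{B}\,\overline{\Sigma}\,\overline{B}^{\top})^{\frac{1}{2}}\bvec{y}'$ with $\bvec{y}'\sim Normal(\vec{0},\overline{I}_m)$, so that $\overline{B}\bvec{x}+\vec{b}$ matches exactly the defining construction of a multivariate $t$ with location $\overline{B}\vec{\mu}+\vec{b}$, scale $\overline{B}\,\overline{\Sigma}\,\overline{B}^{\top}$, and the same degrees of freedom $\nu$.

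The main obstacle is the justification in the fourth step: replacing $\overline{B}\,\overline{\Sigma}^{\frac{1}{2}}\bvec{y}$ by $(\overline{B}\,\overline{\Sigma}\,\overline{B}^{\top})^{\frac{1}{2}}\bvec{y}'$ is legitimate only if it preserves the \emph{joint} law with the same $\boldsymbol{z}$ appearing in the denominator, not merely the marginal of the numerator. The resolution is that $\boldsymbol{z}$ is independent of $\bvec{y}$, so the pair $(\overline{B}\,\overline{\Sigma}^{\frac{1}{2}}\bvec{y},\,\boldsymbol{z})$ has product law; matching the Gaussian marginal therefore matches the full joint law of numerator and denominator, and hence the law of their ratio is unchanged and $\nu$ is inherited unaltered. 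I would emphasize this point explicitly, since it is precisely where the argument could be mistaken for a marginal-only claim. The alternative route, a direct Jacobian change of variables on the pdf of Definition~1, is considerably messier for a rectangular $\overline{B}$ with $m<n$, as it forces one to handle a non-square map and marginalize out the complementary coordinates; the stochastic-representation argument sidesteps exactly that integral and is the approach I would carry out.
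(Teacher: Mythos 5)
Your proposal is correct. The paper states this property with only a citation to Sutradhar (1986) and provides no proof of its own, so there is no in-paper argument to diverge from; your route through the stochastic representation $\bvec{x} = \vec{\mu} + \overline{\Sigma}^{\frac{1}{2}}\bvec{y}/\sqrt{\boldsymbol{z}/\nu}$ is the natural one given the construction the paper itself introduces in Section II.A, your use of full row rank to secure positive definiteness of $\overline{B}\,\overline{\Sigma}\,\overline{B}^{\top}$ is exactly where that hypothesis is needed, and your care about preserving the joint law with the independent $\boldsymbol{z}$ (rather than only the Gaussian marginal) closes the one genuine subtlety in the argument.
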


\begin{prop}[Norm of Standard Multivariate $t$ r.v.] \label{prop:beta_prime_norm} \hfill \\
Suppose $\bvec{x} \in \R^n \sim t \left(\vec{0}, \overline{I}_n, \nu \right)$. By construction, $\bvec{x} \equiv \frac{\bvec{y}}{\sqrt{\boldsymbol{z}/\nu}}$. Then
\begin{equation}
    \|\bvec{x}\|^2 \equiv \frac{\|\bvec{y}\|^2}{\boldsymbol{z}/\nu}
\end{equation}
where $\|\bvec{y}\|^2 \sim CSquare(n)$. Note that the chi-square distribution is a special case of the gamma distribution with shape $\frac{p}{2}$ and scale parameter $2$.  By dividing by $\nu$, we get $\frac{1}{\nu}\|\bvec{x}\|^2 \sim BPrime \left(\frac{n}{2}, \frac{\nu}{2}\right)$.
\end{prop}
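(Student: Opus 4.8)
The plan is to substitute the stochastic representation of $\bvec{x}$ directly into the squared norm and reduce the claim to the ratio-of-gammas characterization of the beta prime distribution given in Section~\ref{ssec:beta_prime_dist}. First I would take the squared $2$-norm of the construction $\bvec{x} \equiv \bvec{y}/\sqrt{\boldsymbol{z}/\nu}$. Since the denominator $\sqrt{\boldsymbol{z}/\nu}$ is a nonnegative scalar, it factors out of the norm, giving $\|\bvec{x}\|^2 \equiv \|\bvec{y}\|^2/(\boldsymbol{z}/\nu)$, which is precisely the displayed identity. Multiplying through by $1/\nu$ then cancels the explicit degrees of freedom, yielding $\tfrac{1}{\nu}\|\bvec{x}\|^2 \equiv \|\bvec{y}\|^2/\boldsymbol{z}$, a ratio of the two independent quantities $\|\bvec{y}\|^2$ and $\boldsymbol{z}$.

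Next I would identify each factor as a gamma variate sharing a common scale. Because $\bvec{y} \sim Normal(\vec{0}, \overline{I}_n)$ has independent standard normal coordinates, $\|\bvec{y}\|^2 = \sum_{i=1}^{n} y_i^2$ is a sum of $n$ squared independent standard normals and therefore $\|\bvec{y}\|^2 \sim CSquare(n)$; by the stated identification of the chi-square as a gamma, this is $Gamma(\tfrac{n}{2}, 2)$. Likewise the construction gives $\boldsymbol{z} \sim CSquare(\nu) = Gamma(\tfrac{\nu}{2}, 2)$, and $\|\bvec{y}\|^2$ and $\boldsymbol{z}$ are independent because $\bvec{y}$ and $\boldsymbol{z}$ are independent in the definition of the multivariate $t$. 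The key observation is that both gammas carry the same scale parameter, $2$.

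Finally I would invoke the beta prime construction of Section~\ref{ssec:beta_prime_dist}: the ratio of two independent gamma random variables with a common scale rate is beta prime distributed, with shape parameters inherited, in order, from the numerator and denominator. Matching shapes then gives $\tfrac{1}{\nu}\|\bvec{x}\|^2 = \|\bvec{y}\|^2/\boldsymbol{z} \sim BPrime(\tfrac{n}{2}, \tfrac{\nu}{2})$, which is the assertion. The only substantive step is this ratio-of-gammas fact; once the shared scale $2$ is noted, it cancels in the ratio and the result is scale-free. If a self-contained derivation were preferred to citing the construction, the main obstacle would be the bivariate change of variables $(\|\bvec{y}\|^2, \boldsymbol{z}) \mapsto (\|\bvec{y}\|^2/\boldsymbol{z}, \boldsymbol{z})$ applied to the product of the two gamma densities, followed by integrating out the nuisance variable $\boldsymbol{z}$; the Jacobian and the scale cancellation must be tracked carefully, but the resulting integral is a standard gamma integral that reproduces exactly the beta prime density of Section~\ref{ssec:beta_prime_dist}.
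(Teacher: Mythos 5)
Your proposal is correct and follows essentially the same route as the paper: substitute the construction $\bvec{x} \equiv \bvec{y}/\sqrt{\boldsymbol{z}/\nu}$ into the squared norm, identify $\|\bvec{y}\|^2 \sim CSquare(n)$ and $\boldsymbol{z} \sim CSquare(\nu)$ as gamma variates with common scale $2$, and invoke the ratio-of-gammas construction of the beta prime from Section~\ref{ssec:beta_prime_dist}. The paper leaves this argument largely implicit in the statement of the property, so your added details (independence of $\|\bvec{y}\|^2$ and $\boldsymbol{z}$, the explicit cancellation of $\nu$) are welcome but not a different method.
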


\begin{prop}[Summation of $n$ beta prime r.v.'s (infinite divisibility property)  \cite{steutel_harn_2003}] \label{prop:beta_prime_sum} \hfill \\
Suppose $\bvec{x}_1, \dots, \bvec{x}_n \overset{i.i.d.}{\sim} BPrime (\gamma, \delta)$. Then
\begin{subequations}
\begin{align}
    \sum_{i=1}^n \bvec{x}_i \sim& \; BPrime (\phi, \psi) \\
    \phi = & \; \frac {n\gamma (\gamma +\delta ^{2}-2\delta +n\gamma \delta -2n\gamma +1)}{(\delta -1)(\gamma +\delta -1)} \\
    \psi = & \; \frac {2\gamma +\delta ^{2}-\delta +n\gamma \delta -2n\gamma }{\gamma +\delta -1}
\end{align}
\end{subequations}
\end{prop}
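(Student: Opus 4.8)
The plan is to take the statement at face value as an exact distributional identity and to prove it by matching transforms. Since the summands are i.i.d.\ and supported on $\R_+$, the Laplace transform of $S_n \equiv \sum_{i=1}^n \bvec{x}_i$ factorizes, $L_{S_n}(s) = [L_{\bvec{x}_1}(s)]^n$, so the whole identity reduces to showing that this $n$-th power equals the Laplace transform of a single $BPrime(\phi,\psi)$ variate for the stated parameters. A transform is the right tool here because the beta prime possesses only finitely many finite moments (those of order below $\delta$), so its finite moments cannot by themselves pin down the law, and a Laplace transform, rather than a moment sequence, is needed to certify equality in distribution.

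First I would compute $L_{\bvec{x}_1}(s)$ from the gamma-ratio representation of Section \ref{ssec:beta_prime_dist}, $\bvec{x}_i \equiv \boldsymbol{y}_i / \boldsymbol{z}_i$ with $\boldsymbol{y}_i \sim Gamma(\gamma,\theta)$ and $\boldsymbol{z}_i \sim Gamma(\delta,\theta)$ independent; since the ratio is scale-free, the common $\theta$ drops out. Conditioning on $\boldsymbol{z}_i = z$ turns $\bvec{x}_i$ into a scaled gamma variate with transform $(1 + s/z)^{-\gamma}$, and integrating $z$ against the $Gamma(\delta,1)$ density expresses $L_{\bvec{x}_1}(s)$ through Tricomi's confluent hypergeometric function $U(\gamma, \gamma - \delta + 1, s)$. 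The candidate right-hand side is the same expression with $(\gamma,\delta)$ replaced by $(\phi,\psi)$.

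To single out the admissible parameters I would expand both transforms for small $s$, i.e.\ compare the low-order moments, using the closed form $E[\bvec{x}_i^k] = \beta(\gamma+k, \delta-k)/\beta(\gamma,\delta)$ (valid for $k<\delta$). Equating the mean and variance of $S_n$ to those of $BPrime(\phi,\psi)$ gives two equations; solving the first for $\phi = E[S_n](\psi-1)$ and substituting into the second collapses it to $\psi - 2 = E[S_n](E[S_n]+1)/\mathrm{Var}(S_n)$, and back-substitution reproduces precisely the stated $\phi$ and $\psi$. These are necessary conditions for the identity, so they correctly isolate the only possible parameter values.

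The main obstacle is the remaining step, and it is genuinely the crux: verifying that $[L_{\bvec{x}_1}(s)]^n = L_{BPrime(\phi,\psi)}(s)$ holds for \emph{all} $s > 0$, not merely that the first two moments agree. This amounts to a nontrivial power identity among confluent hypergeometric functions, and it is precisely here that the infinite-divisibility structure of the gamma-ratio family \cite{steutel_harn_2003} must be brought in to control the higher-order behavior and close the argument. I expect essentially all of the difficulty to reside in establishing this full transform equality; the moment bookkeeping that fixes $\phi$ and $\psi$ is routine by comparison.
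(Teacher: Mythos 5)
Your moment bookkeeping is correct, and it is in fact the \emph{entire} content of the paper's statement: matching the mean $n\gamma/(\delta-1)$ and variance $n\gamma(\gamma+\delta-1)/\big((\delta-2)(\delta-1)^2\big)$ of the sum to a single $BPrime(\phi,\psi)$ reproduces exactly the displayed $\phi$ and $\psi$ (and for $n=1$ they collapse to $\gamma,\delta$, as they must). But the step you flag as the crux cannot be carried out, because the transform identity you hope to prove is false: the beta prime family is not closed under convolution. Infinite divisibility --- which is what \cite{steutel_harn_2003} actually supplies --- asserts that for each $n$ there exists \emph{some} law whose $n$-fold convolution is $BPrime(\gamma,\delta)$; it runs in the opposite direction from what you need and says nothing about the $n$-fold convolution of beta primes being beta prime. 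A concrete obstruction: the beta prime density is regularly varying with tail index $\delta$, and sums of i.i.d.\ regularly varying variables preserve that index, $\Prob\{\bvec{x}_1+\cdots+\bvec{x}_n > x\} \sim n\,\Prob\{\bvec{x}_1 > x\} \sim C x^{-\delta}$, whereas $BPrime(\phi,\psi)$ has tail index $\psi$. For $n\geq 2$ and generic parameters $\psi \neq \delta$ (take $\gamma=1$, $\delta=3$, $n=2$: then $\psi = 10/3$), so the two laws have genuinely different tails, $[L_{\bvec{x}_1}(s)]^n \neq L_{BPrime(\phi,\psi)}(s)$, and no power identity among Tricomi $U$ functions will rescue the argument.

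For comparison: the paper offers no proof at all --- the property is stated bare with the citation --- and its parameters are precisely the two-moment match your own computation produced. The statement is therefore best read as a moment-matched \emph{approximation} of the convolution within the beta prime family, with infinite divisibility serving only as heuristic motivation; downstream, this means the distribution in \eqref{eq:Q_dist} and the quantile built from it in the collision-avoidance reformulation are themselves approximate rather than exact. Your proposal is sound exactly up to the point where it attempts to prove more than is true; the honest completion is to stop after the moment matching, state the result as an approximation (ideally with some control of the error, e.g.\ noting the tail-index mismatch), rather than to seek a full Laplace-transform equality.
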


\section*{Funding Sources}

This material is based upon work supported by the National Science Foundation under NSF Grant Number CMMI-2105631.  Any opinions, findings, and conclusions or recommendations expressed in this material are those of the authors and do not necessarily reflect the views of the National Science Foundation.  

\bibliography{main}

\begin{thebibliography}{46}
\newcommand{\enquote}[1]{``#1''}
\providecommand{\natexlab}[1]{#1}
\providecommand{\url}[1]{\texttt{#1}}
\providecommand{\urlprefix}{URL }
\expandafter\ifx\csname urlstyle\endcsname\relax
  \providecommand{\doi}[1]{\discretionary{}{}{}https://doi.org/#1}\else
  \providecommand{\doi}[1]{\discretionary{}{}{}\urlstyle{rm}\url{https://doi.org/#1}}\fi

\bibitem[{Pisarenko and Rodkin(2010)}]{Pisarenko2010}
Pisarenko, V., and Rodkin, M., \emph{Heavy-Tailed Distributions in Disaster
  Analysis}, Advances in Natural and Technological Hazards Research, Springer
  Dordrecht, 2010.

\bibitem[{Popescu and Popescu(2010)}]{Popescu2010}
Popescu, E., and Popescu, N., \enquote{Modeling Fluctuations of Solar Wind
  Parameters with Heavy-Tailed Distributions,} \emph{Romanian Astronomical
  Journal}, 2010.

\bibitem[{Nemirovski and Shapiro(2006)}]{nemirovski2006convex}
Nemirovski, A., and Shapiro, A., \enquote{Convex approximations of chance
  constrained programs,} \emph{J. Optimization}, Vol.~17, 2006, pp. 969--996.

\bibitem[{Calafiore and Ghaoui(2006)}]{calafiore2006distributionally}
Calafiore, G., and Ghaoui, L., \enquote{On Distributionally Robust
  Chance-Constrained Linear Programs,} \emph{J. Optim Theory Appl.}, Vol. 130,
  No.~1, 2006, pp. 1--22.
\newblock \doi{10.1007/s10957-006-9084-x},
  \urlprefix\url{https://doi.org/10.1007/s10957-006-9084-x}.

\bibitem[{Paulson et~al.(2017)Paulson, Buehler, Braatz, and
  Mesbah}]{paulson2017stochastic}
Paulson, J., Buehler, E., Braatz, R., and Mesbah, A., \enquote{Stochastic model
  predictive control with joint chance constraints,} \emph{Int'l J. Ctrl.},
  2017, pp. 1--14.

\bibitem[{Idan and Speyer(2008)}]{Idan2008}
Idan, M., and Speyer, J.~L., \enquote{Cauchy estimation for linear scalar
  systems,} \emph{2008 47th IEEE Conference on Decision and Control}, 2008, pp.
  658--665.
\newblock \doi{10.1109/CDC.2008.4739421}.

\bibitem[{Duong et~al.(2019)Duong, Speyer, and Idan}]{Duong2019}
Duong, N., Speyer, J.~L., and Idan, M., \enquote{Laplace Controller for Linear
  Scalar Systems,} \emph{2019 27th Mediterranean Conference on Control and
  Automation (MED)}, 2019, pp. 334--339.
\newblock \doi{10.1109/MED.2019.8798588}.

\bibitem[{Oldewurtel et~al.(2014)Oldewurtel, Jones, Parisio, and
  Morari}]{oldewurtel2014stochastic}
Oldewurtel, F., Jones, C., Parisio, A., and Morari, M., \enquote{Stochastic
  model predictive control for building climate control,} \emph{{IEEE} Trans.
  Control Syst. Technol.}, Vol.~22, No.~3, 2014, pp. 1198--1205.

\bibitem[{Ono and Williams(2008)}]{ono2008iterative}
Ono, M., and Williams, B., \enquote{Iterative Risk Allocation: A new approach
  to robust Model Predictive Control with a joint chance constraint,}
  \emph{IEEE Conf. Dec. \& Control}, 2008, pp. 3427--3432.
\newblock \doi{10.1109/CDC.2008.4739221}.

\bibitem[{Vitus and Tomlin(2011)}]{vitus_feedback_2011}
Vitus, M.~P., and Tomlin, C.~J., \enquote{On feedback design and risk
  allocation in chance constrained control,} \emph{2011 50th IEEE Conference on
  Decision and Control and European Control Conference}, 2011, pp. 734--739.
\newblock \doi{10.1109/CDC.2011.6160721}.

\bibitem[{Idan and Speyer(2019)}]{Idan2019}
Idan, M., and Speyer, J.~L., \enquote{Characteristic Function Approach to
  Smoothing of Linear Scalar Systems with Additive Cauchy Noises,} \emph{2019
  27th Mediterranean Conference on Control and Automation (MED)}, 2019, pp.
  238--243.
\newblock \doi{10.1109/MED.2019.8798514}.

\bibitem[{Sivaramakrishnan et~al.(2021)Sivaramakrishnan, Vinod, and
  Oishi}]{Sivaramakrishnan2021TAC}
Sivaramakrishnan, V., Vinod, A.~P., and Oishi, M., \enquote{Convexified
  Open-Loop Stochastic Optimal Control for Linear Non-Gaussian Systems,}
  \emph{arXiv:2010.02101}, 2021.

\bibitem[{Vinod et~al.(2019)Vinod, Sivaramakrishnan, and
  Oishi}]{vinod2019piecewise}
Vinod, A.~P., Sivaramakrishnan, V., and Oishi, M., \enquote{Piecewise-affine
  approximation-based stochastic optimal control with Gaussian joint chance
  constraints,} \emph{Proc. Amer. Ctrl. Conf.}, 2019, pp. 2942--2949.

\bibitem[{Calafiore and Campi(2006)}]{calafiore2006scenario}
Calafiore, G., and Campi, M., \enquote{The scenario approach to robust control
  design,} \emph{{IEEE} Trans. Autom. Control}, Vol.~51, No.~5, 2006, pp.
  742--753.
\newblock \doi{10.1109/TAC.2006.875041}.

\bibitem[{Blackmore et~al.(2011)Blackmore, Ono, and
  Williams}]{blackmore2011chance}
Blackmore, L., Ono, M., and Williams, B., \enquote{Chance-constrained optimal
  path planning with obstacles,} \emph{{IEEE} Trans. Robot.}, Vol.~27, No.~6,
  2011, pp. 1080--1094.

\bibitem[{Campi and Garatti(2011)}]{campi2011sampling}
Campi, M., and Garatti, S., \enquote{A Sampling-and-Discarding Approach to
  Chance-Constrained Optimization: Feasibility and Optimality,} \emph{J. Optim
  Theory Appl.}, Vol. 148, No.~2, 2011, pp. 257--280.
\newblock \doi{10.1007/s10957-010-9754-6},
  \urlprefix\url{https://doi.org/10.1007/s10957-010-9754-6}.

\bibitem[{Car{\`e} et~al.(2014)Car{\`e}, Garatti, and Campi}]{care2014fast}
Car{\`e}, A., Garatti, S., and Campi, M.~C., \enquote{FAST--fast algorithm for
  the scenario technique,} \emph{Operations Res.}, Vol.~62, No.~3, 2014, pp.
  662--671.

\bibitem[{Campi et~al.(2018)Campi, Garatti, and Ramponi}]{Campi2018TAC}
Campi, M.~C., Garatti, S., and Ramponi, F.~A., \enquote{A General Scenario
  Theory for Nonconvex Optimization and Decision Making,} \emph{IEEE
  Transactions on Automatic Control}, Vol.~63, No.~12, 2018, pp. 4067--4078.
\newblock \doi{10.1109/TAC.2018.2808446}.

\bibitem[{Li et~al.(2014)Li, Liu, and Yuan}]{Li2014}
Li, S., Liu, X., and Yuan, G., \enquote{Supervisory predictive control of
  weighted least square support vector machine based on Cauchy distribution,}
  \emph{The 26th Chinese Control and Decision Conference (2014 CCDC)}, 2014,
  pp. 3523--3526.
\newblock \doi{10.1109/CCDC.2014.6852789}.

\bibitem[{Wang et~al.(2008)Wang, Sugi, Kawana, Wang, and Nakamura}]{Wang2008}
Wang, B., Sugi, T., Kawana, F., Wang, X., and Nakamura, M.,
  \enquote{Conditional probability of Cauchy distribution in automatic sleep
  stage determination for sleep data with artifacts,} \emph{2008 International
  Conference on Control, Automation and Systems}, 2008, pp. 530--533.
\newblock \doi{10.1109/ICCAS.2008.4694697}.

\bibitem[{Zhao et~al.(2022)Zhao, Shang, and Wei}]{Zhao2022}
Zhao, Z., Shang, H., and Wei, B., \enquote{Tackling Nonconvex Collision
  Avoidance Constraints for Optimal Trajectory Planning Using Saturation
  Functions,} \emph{Journal of Guidance, Control, and Dynamics}, Vol.~45,
  No.~6, 2022, pp. 1002--1016.
\newblock \doi{10.2514/1.G005052}.

\bibitem[{Enriquez and Garc\'{\i}a-Pelayo(2022)}]{Enriquez2022}
Enriquez, C.~D., and Garc\'{\i}a-Pelayo, R., \enquote{Short-Term Collision
  Probability Algorithm for Parallelepiped-Shaped Satellites,} \emph{Journal of
  Guidance, Control, and Dynamics}, Vol.~45, No.~6, 2022, pp. 1093--1107.
\newblock \doi{10.2514/1.G005509}.

\bibitem[{DiMatteo et~al.(2012)DiMatteo, Florakis, Weichbrod, and
  Milam}]{DiMatteo2012}
DiMatteo, J., Florakis, D., Weichbrod, A., and Milam, M., \enquote{Proximity
  Operations Testing with a Rotating and Translating Resident Space Object,}
  \emph{AIAA Guidance, Navigation, and Control Conference}, American Institute
  of Aeronautics and Astronautics, 2012.
\newblock \doi{10.2514/6.2009-6293}.

\bibitem[{Maestrini and Di~Lizia(2022)}]{Maestrini2022}
Maestrini, M., and Di~Lizia, P., \enquote{Guidance Strategy for Autonomous
  Inspection of Unknown Non-Cooperative Resident Space Objects,} \emph{Journal
  of Guidance, Control, and Dynamics}, Vol.~45, No.~6, 2022, pp. 1126--1136.
\newblock \doi{10.2514/1.G006126}.

\bibitem[{Ulybyshev(2011)}]{Ulybyshev2011}
Ulybyshev, Y., \enquote{Trajectory Optimization for Spacecraft Proximity
  Operations with Constraints,} \emph{AIAA Guidance, Navigation, and Control
  Conference}, 2011.
\newblock \doi{10.2514/6.2011-6629}.

\bibitem[{Priore et~al.(2021)Priore, Vinod, Sivaramakrishnan, Petersen, and
  Oishi}]{PrioreACC21}
Priore, S., Vinod, A., Sivaramakrishnan, V., Petersen, C., and Oishi, M.,
  \enquote{Stochastic multi-satellite maneuvering with constraints in an
  elliptical orbit,} \emph{Proc. Amer. Ctrl. Conf.}, 2021, pp. 4261--4268.
\newblock \doi{10.23919/ACC50511.2021.9483158}.

\bibitem[{Priore et~al.(2022)Priore, Petersen, and Oishi}]{prioreACC22}
Priore, S., Petersen, C., and Oishi, M., \enquote{Approximate Quantiles for
  Stochastic Optimal Control of LTI Systems with Arbitrary Disturbances,} ,
  2022.

\bibitem[{{Lipp} and {Boyd}(2016)}]{boyd_dc_2016}
{Lipp}, T., and {Boyd}, S., \enquote{Variations and extension of the
  convex–concave procedure,} \emph{Optimization and Eng.}, Vol.~17, 2016, pp.
  263--287.

\bibitem[{Kotz and Nadarajah(2004)}]{kotz_nadarajah_2004}
Kotz, S., and Nadarajah, S., \emph{Introduction}, Cambridge University Press,
  2004, Chap.~1, pp. 1--35.
\newblock \doi{10.1017/CBO9780511550683.002}.

\bibitem[{Braun et~al.(2018)Braun, Rost, Lim, Giri, Olson, Kotsonis, Stan,
  Brenner, Maria, and Hopkins}]{Braun2018}
Braun, J.~L., Rost, C.~M., Lim, M., Giri, A., Olson, D.~H., Kotsonis, G.~N.,
  Stan, G., Brenner, D.~W., Maria, J.-P., and Hopkins, P.~E.,
  \enquote{Charge-Induced Disorder Controls the Thermal Conductivity of
  Entropy-Stabilized Oxides,} \emph{Advanced Materials}, Vol.~30, No.~51, 2018,
  p. 1805004.
\newblock \doi{10.1002/adma.201805004}, 1805004.

\bibitem[{Johnson et~al.(1995)Johnson, Kotz, and
  Balakrishnan}]{johnson1995continuous}
Johnson, N., Kotz, S., and Balakrishnan, N., \emph{Continuous Univariate
  Distributions, Volume 2}, Wiley Series in Probability and Statistics, Wiley,
  1995.

\bibitem[{Casella and Berger(2002)}]{casella2002}
Casella, G., and Berger, R., \emph{Statistical Inference}, Duxbury advanced
  series in statistics and decision sciences, Cengage Learning, 2002.

\bibitem[{Hill(1970)}]{Hill1970}
Hill, G.~W., \enquote{Algorithm 396: Student's t-Quantiles,} \emph{Commun.
  ACM}, Vol.~13, No.~10, 1970, p. 619–620.
\newblock \doi{10.1145/355598.355600}.

\bibitem[{Cran et~al.(1977)Cran, Martin, and Thomas}]{Cran1977}
Cran, G.~W., Martin, K.~J., and Thomas, G.~E., \enquote{Remark AS R19 and
  Algorithm AS 109: A Remark on Algorithms: AS 63: The Incomplete Beta Integral
  AS 64: Inverse of the Incomplete Beta Function Ratio,} \emph{Journal of the
  Royal Statistical Society. Series C (Applied Statistics)}, Vol.~26, No.~1,
  1977, pp. 111--114.
\newblock \urlprefix\url{http://www.jstor.org/stable/2346887}.

\bibitem[{Yu and Zelterman(2017)}]{Yu2017}
Yu, C., and Zelterman, D., \enquote{A general approximation to quantiles,}
  \emph{Communications in Statistics - Theory and Methods}, Vol.~46, No.~19,
  2017, pp. 9834--9841.
\newblock \doi{10.1080/03610926.2016.1222433},
  \urlprefix\url{https://doi.org/10.1080/03610926.2016.1222433}.

\bibitem[{Kerman(2011)}]{kerman2011}
Kerman, J., \enquote{A closed-form approximation for the median of the beta
  distribution,} , 2011.

\bibitem[{Berg and Pedersen(2006)}]{Berg2006}
Berg, C., and Pedersen, H.~L., \enquote{{The Chen-Rubin Conjecture in a
  Continuous Setting},} \emph{Methods and Applications of Analysis}, Vol.~13,
  No.~1, 2006, pp. 63 -- 88.
\newblock \doi{maa/1175797481}.

\bibitem[{Gaunt and Merkle(2021)}]{GAUNT2021}
Gaunt, R.~E., and Merkle, M., \enquote{On bounds for the mode and median of the
  generalized hyperbolic and related distributions,} \emph{Journal of
  Mathematical Analysis and Applications}, Vol. 493, No.~1, 2021, p. 124508.
\newblock
  \urlprefix\url{https://www.sciencedirect.com/science/article/pii/S0022247X20306703}.

\bibitem[{Lyon(2021)}]{Lyon2021}
Lyon, R.~F., \enquote{On closed-form tight bounds and approximations for the
  median of a gamma distribution,} \emph{PLoS One}, Vol.~16, No.~5, 2021, p.
  e0251626.

\bibitem[{Horst et~al.(2000)Horst, Pardalos, and Thoai}]{horst2000}
Horst, R., Pardalos, P.~M., and Thoai, N.~V., \emph{Introduction to global
  optimization}, Springer Science \& Business Media, 2000.

\bibitem[{Grant and Boyd(2014)}]{cvx}
Grant, M., and Boyd, S., \enquote{{CVX}: Matlab Software for Disciplined Convex
  Programming, version 2.1,} \url{http://cvxr.com/cvx}, Mar. 2014.

\bibitem[{Gurobi~Optimization(2020)}]{gurobi}
Gurobi~Optimization, L., \enquote{Gurobi Optimizer Reference Manual,} , 2020.
\newblock \urlprefix\url{http://www.gurobi.com}.

\bibitem[{Herceg et~al.(2013)Herceg, Kvasnica, Jones, and Morari}]{MPT3}
Herceg, M., Kvasnica, M., Jones, C., and Morari, M., \enquote{{Multi-Parametric
  Toolbox 3.0},} \emph{Proc. Euro. Ctrl. Conf.}, Z\"urich, Switzerland, 2013,
  pp. 502--510.

\bibitem[{Wiesel(1989)}]{wiesel1989_spaceflight}
Wiesel, W., \emph{Spaceflight Dynamics}, McGraw-Hill, New York, 1989.

\bibitem[{Sutradhar(1986)}]{Sutradhar1986}
Sutradhar, B.~C., \enquote{On the Characteristic Function of Multivariate
  Student t-Distribution,} \emph{The Canadian Journal of Statistics / La Revue
  Canadienne de Statistique}, Vol.~14, No.~4, 1986, pp. 329--337.
\newblock \urlprefix\url{http://www.jstor.org/stable/3315191}.

\bibitem[{Steutel and Harn(2003)}]{steutel_harn_2003}
Steutel, F.~W., and Harn, K.~v., \emph{Infinite divisibility of probability
  distributions on the real line}, 1\textsuperscript{st} ed., CRC Press, 2003.

\end{thebibliography}
\end{document}